\newtheorem{Theorem}{Theorem}[section]
\newtheorem{lem}[Theorem]{Lemma}
\newtheorem{Definition}[Theorem]{Definition}
\newtheorem{Example}[Theorem]{Example}
\numberwithin{equation}{section}
\begin{document}

\title{A Construction of MDS Quantum Convolutional Codes\footnote{
 {\small Email addresses: zghui2012@126.com (G. Zhang),  bocong\_chen@yahoo.com (B. Chen), lcli@yahoo.com (L. Li).}}}

\author{Guanghui Zhang$^1$, Bocong Chen$^2$, Liangchen Li$^1$}

\date{\small
${}^1$School of Mathematical Sciences,
Luoyang Normal University,
Luoyang, Henan, 471022, China\\
${}^2$School of Physical \& Mathematical Sciences,
         Nanyang Technological University, Singapore 637616, Singapore}

\maketitle
\begin{abstract}
In this paper, two new families of MDS quantum convolutional codes are constructed.
The first one can be regarded as a generalization of \cite[Theorem 6.5]{GGGlinear}, in the sense that
we do not assume that $q\equiv1\pmod{4}$.
More specifically, we obtain two classes of MDS quantum convolutional codes with parameters:
{\rm (i)}~ $[(q^2+1, q^2-4i+3,1;2,2i+2)]_q$, where $q\geq5$ is an odd prime power and $2\leq i\leq(q-1)/2$;
{\rm (ii)}~ $[(\frac{q^2+1}{10},\frac{q^2+1}{10}-4i,1;2,2i+3)]_q$, where $q$ is an odd prime power with the form
 $q=10m+3$ or $10m+7$ ($m\geq2$),  and $2\leq i\leq2m-1$.

\medskip
\textbf{Keywords:}~~Quantum convolutional codes, convolutional codes, constacyclic codes.

\medskip
\textbf{PACS number(s): 03.67.Pp, 89.70.-a}
\end{abstract}

\section{Introduction}
Quantum block codes  are used to protect   quantum information over noisy quantum channels.
Many works have been done for the constructions of good quantum error-correcting codes (e.g. see \cite{Steane}-\cite{Chen14}).
Quantum convolutional coding theory provides  a different paradigm for coding quantum information
and has numerous benefits for quantum communication  (\cite{Ollivier1}-\cite{FGG07}). For example, the convolutional structure is useful for a quantum communication scenario where a sender possesses a stream of qubits to send to a receiver.

The first important quantum block code construction is that of \cite{Steane}-\cite{Steane2},
which yields  the  commonly called  Calderbank Shor Steane (CSS) construction.
In contrast to quantum block codes, the construction for a CSS quantum convolutional code is similar to that for the block
case, except that we import classical convolutional codes rather than classical block codes \cite[Chap 9]{LB}.
Forney {\it et al.} \cite{FGG07} provided  many constructions of CSS quantum convolutional
codes from classical binary convolutional codes.
A Calderbank-Rains-Shor-Sloane (CRSS) quantum convolutional code was obtained
from a classical convolutional code over $\mathbb{F}_4$ \cite[Chap 9]{LB}.
Many classes of quantum convolutional codes
have been constructed (e.g. see \cite{GR06}-\cite{GGGarxiv}).

Almeida
and Palazzo Jr. in \cite{DP} obtained a
quantum convolutional code with parameters  $[(4, 1, 3)]$  (memory $m=3$).
Tan and Li in \cite{TL} constructed quantum convolutional codes
through  LDPC codes.
Very recently,  La Guardia \cite{GGG14}-\cite{GGGarxiv}  applied  the  methods   presented
by Piret in \cite{Piret} and then generalized  by Aly {\it et al.} in \cite{AGKM}, to
construct classical and MDS quantum convolutional codes.

Motivated by \cite{GGGlinear}, two new families of MDS quantum convolutional codes are constructed in
this paper. The first one can be regarded as a generalization of \cite[Theorem 6.5]{GGGlinear}, in the sense that
we do not assume that $q\equiv1\pmod{4}$.
More specifically, we obtain two classes of MDS quantum convolutional codes with parameters:
{\rm (i)}~$[(q^2+1, q^2-4i+3,1;2,2i+2)]_q$, where $q\geq5$ is an odd prime power and $2\leq i\leq(q-1)/2$;
{\rm (ii)}~$[(\frac{q^2+1}{10},\frac{q^2+1}{10}-4i,1;2,2i+3)]_q$, where $q$ is an odd prime power with the form
 $q=10m+3$ or $10m+7$ ($m\geq2$),  and $2\leq i\leq2m-1$.

The paper is organized as follows. In Section 2, we  recall basic notation and necessary  facts about constacyclic codes,
classical convolutional codes and  MDS quantum convolutional codes.
In Section 3, we propose constructions of new families of  MDS quantum convolutional codes derived from constacyclic  codes.

\section{Background}

In this section, we  recall basic notation and necessary  facts which
are important to the constructions of quantum convolutional codes.
We adopt the notation in \cite{GGGlinear}.

\subsection{Classical convolutional codes}
As mentioned in Section 1, quantum convolutional codes can be constructed from classical convolutional codes.
In this subsection we present a brief review of classical convolutional codes.
Let  $G(D)=(g_{ij})\in\mathbb{ F}_{q^2}[D]^{k\times n}$, where $\mathbb{ F}_{q^2}[D]^{k\times n}$ denotes
the set of all $k\times n$ matrices with entries in $\mathbb{F}_{q^2}[D]$;
$G(D)$ is called {\it basic} if it has a polynomial right inverse. A basic
generator matrix is called {\it reduced} if the overall constraint length
$\gamma=\sum_{i=1}^k\gamma_i$
has the smallest value among all basic generator matrices,  where $\gamma_i=\max_{1\leq j\leq n}{\{\deg g_{ij}\}}$.
In this case the overall constraint length
$\gamma$ will be called the {\it degree} of the resulting code.

\begin{Definition}(See \cite{AKS})
A  convolutional code $V$ with parameters $(n,k,\gamma; \mu, d_f)_{q^2}$ is a submodule of $\mathbb{F}_{q^2}[D]^n$
generated by a reduced basic matrix $G(D)=(g_{ij})\in\mathbb{ F}_{q^2}[D]^{k\times n}$,
$V=\{\mathbf{u}(D) G(D)\,|\,\mathbf{ u}(D)\in\mathbb{ F}_{q^2}[D]^k\}$,
where $n$ is the length, $k$ is the dimension, $\gamma_i=\max_{1\leq j\leq n}{\{\deg g_{ij}\}}$ is the degree,
$\mu=\max_{1\leq i\leq k}{\{\gamma_i\}}$ is the memory and
$d_f={\rm wt}(V)=\mid{\{{\rm wt}(\mathbf{v}(D))\,|\,\mathbf{v}(D)\in V, \mathbf{v}(D)\neq0\}}$
is the free distance of the code.
Here,  ${\rm wt}(\mathbf{v}(D))=\sum_{i=1}^n{\rm wt}(v_i(D))$, where ${\rm wt}(v_i(D))$
is the number of nonzero coefficients of $v_i(D)$.
\end{Definition}

The Hermitian inner product on $\mathbb{F}_{q^2}[D]^n$ is defined as
$\langle \mathbf{u}(D)\,|\,\mathbf{v}(D)\rangle_h=\sum_i\mathbf{u}_i\cdot\mathbf{v}_i^q$,
where $\mathbf{u}_i, \mathbf{v}_i\in\mathbb{ F}_{q^2}^n$ and $\mathbf{v}_i^q=(v_{1i}^q, v_{2i}^q, \cdots, v_{ni}^q)$.
The Hermitian dual of the code $V$ is defined by
$$
V^{\perp_h}=\Big\{\mathbf{ u}(D)\in\mathbb{ F}_{q^2}[D]^n\,\Big{|}\,\langle \mathbf{u}(D)\,|\,\mathbf{v}(D)\rangle_h=0 ~~\hbox{for~ all~ $\mathbf{v}(D)\in V$}\Big\}.
$$

We can construct  convolutional codes from block codes. Let $\mathcal{C}$ be an $[n,k,d]_{q^2}$
linear code with parity check matrix $H$. Split $H$ into $\mu+1$ disjoint submatrices $H_i$ such that
\begin{equation}\label{partition}
H=\left(
    \begin{array}{c}
      H_0 \\
      H_1\\
      \vdots \\
      H_\mu \\
    \end{array}
  \right)
\end{equation}
where each $H_i$ has $n$ columns. We then have the polynomial matrix
\begin{equation}\label{G}
G(D)=\tilde{H}_0+\tilde{H}_1D+\cdots+\tilde{H}_\mu D^\mu
\end{equation}
where the matrices $\tilde{H}_i$ for all $1\leq i\leq\mu$, are derived from the respective matrices
$H_i$ by adding zero-rows at the bottom in such a way that the matrix $\tilde{H}_i$
has $\kappa$ rows in total. Here $\kappa$  is the maximal number of rows among the matrices $H_i$,
$1\leq i\leq\mu$.
It is well known that $G(D)$ generates a convolutional
code with $\kappa$ rows, and that $\mu$ is the memory of the resulting
convolutional code.

\begin{Theorem}(See \cite[Theorem 3]{AGKM})\label{juanji}
Suppose that $\mathcal{C}$
is a
linear code over $\mathbb{F}_{q^2}$ with parameters $[n, k, d]_{q^2}$ and assume also that
$H\in\mathbb{ F}_{q^2}^{(n-k)\times n}$ is a parity check matrix for $\mathcal{C}$
partitioned into
submatrices $H_0, H_1, \cdots, H_\mu$ as in (\ref{partition})
such that $\kappa={\rm rk}H_0$ and ${\rm rk}H_i\leq\kappa$ for $1\leq i\leq\mu$
and consider the polynomial
matrix $G(D)$ as in (\ref{G}).
Then we have:

\item[{(1)}]The matrix $G(D)$ is a reduced basic generator matrix.

\item[{(2)}]If $\mathcal{C}^{\perp_h}\subseteq\mathcal{C}$, then the convolutional code
$V=\{\mathbf{u}(D) G(D)\,|\,\mathbf{ u}(D)\in\mathbb{ F}_{q^2}[D]^{n-k}\}$ satisfies $V\subseteq V^{\perp_h}$.

\item[{(3)}]If $d_f$ and $d_f^{\perp_h}$ denote the free distances of $V$ and $V^{\perp_h}$ respectively,
$d_i$ denotes the minimum distance  of the code $\mathcal{C}_i=\{\mathbf{v}\in \mathbb{F}_{q^2}^n\,|\, \mathbf{v}\tilde{H}_i^t=0\}$
and $d^{\perp_h}$ is the minimum distance of $\mathcal{C}^{\perp_h}$, then one has $\min{\{d_0+d_\mu, d\}}\leq d_f^{\perp_h}\leq d$
and $d_f\geq d^{\perp_h}$.
\end{Theorem}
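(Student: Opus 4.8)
The plan is to reduce everything to two elementary observations about the construction and then to chase coefficients. The first observation is structural: after the zero-row padding, the \emph{nonzero} rows of $\tilde{H}_0,\tilde{H}_1,\dots,\tilde{H}_\mu$ are exactly the rows of $H$, and these are linearly independent over $\mathbb{F}_{q^2}$ (hence over any extension field) since $H$ has full row rank $n-k$; moreover the hypotheses $\mathrm{rk}\,H_0=\kappa$ and ``$\kappa$ is the largest number of rows among the $H_i$'' together force $H_0$ to have exactly $\kappa$ rows, so in fact $\tilde{H}_0=H_0$ already has full row rank $\kappa$. The second observation is computational: for $\mathbf{v}(D)=\sum_t\mathbf{v}_tD^t$ and $\mathbf{w}(D)=\sum_t\mathbf{w}_tD^t$ one has $\langle\mathbf{v}(D)\,|\,\mathbf{w}(D)\rangle_h=\sum_t\mathbf{v}_t\cdot\mathbf{w}_t^{q}$, and substituting $\mathbf{v}(D)=\mathbf{u}(D)G(D)$, so that $\mathbf{v}_t=\sum_j\mathbf{u}_{t-j}\tilde{H}_j$, yields the identity
\begin{equation*}
\langle\,\mathbf{u}(D)G(D)\,|\,\mathbf{w}(D)\,\rangle_h\;=\;\sum_{a}\mathbf{u}_a\cdot\Big(\sum_{j=0}^{\mu}\tilde{H}_j\,(\mathbf{w}_{a+j}^{q})^{t}\Big),
\end{equation*}
valid for every $\mathbf{w}(D)$; in particular $\mathbf{w}(D)\in V^{\perp_h}$ iff $\sum_{j}\tilde{H}_j(\mathbf{w}_{a+j}^{q})^{t}=\mathbf{0}$ for all $a$. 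Everything below is bookkeeping with these two facts.

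For part~(1) I would invoke the standard criteria: a polynomial generator matrix is \emph{basic} iff its maximal minors are coprime, equivalently iff $G(\lambda)$ has full row rank for every $\lambda\in\overline{\mathbb{F}}_{q^2}$; and a basic $G(D)$ is \emph{reduced} iff its leading-coefficient matrix --- the matrix whose $\ell$-th row is the coefficient of the highest power of $D$ occurring in the $\ell$-th row of $G(D)$ --- has full row rank. For basicity: if $\mathbf{w}\,G(\lambda)=\mathbf{0}$ for some $\lambda\in\overline{\mathbb{F}}_{q^2}$ and some nonzero $\mathbf{w}$, then $\sum_i\lambda^{i}\,\mathbf{w}\tilde{H}_i=\mathbf{0}$ is a vanishing linear combination of the rows of the $\tilde{H}_i$, i.e.\ of the (linearly independent) rows of $H$, so every coefficient vanishes; reading off the block $i=0$, where all $\kappa$ rows of $\tilde{H}_0$ are present, forces $\mathbf{w}=\mathbf{0}$, a contradiction. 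For reducedness: the top coefficient of the $\ell$-th row of $G(D)$ equals the $\ell$-th row of $H_i$ for the largest $i$ for which $H_i$ still has an $\ell$-th row, which is again a row of $H$; distinct $\ell$ pick out distinct rows of $H$, so this leading-coefficient matrix is a $\kappa\times n$ submatrix of $H$ and hence has rank $\kappa$. Thus $G(D)$ is reduced basic.

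For part~(2): $\mathcal{C}^{\perp_h}\subseteq\mathcal{C}$ is equivalent to $H\,(H^{(q)})^{t}=\mathbf{0}$, where $H^{(q)}$ is $H$ with every entry raised to the $q$th power (the rows of $H^{(q)}$ generate $\mathcal{C}^{\perp_h}$, while $\mathcal{C}=\ker H$); reading this blockwise gives $H_i(H_{i'}^{(q)})^{t}=\mathbf{0}$, hence $\tilde{H}_i(\tilde{H}_{i'}^{(q)})^{t}=\mathbf{0}$, for all $i,i'$. Applying the displayed identity with $\mathbf{w}(D)=\mathbf{u}'(D)G(D)$ and writing $\mathbf{w}_{a+j}^{q}$ as an $\mathbb{F}_{q^2}$-combination of the rows of the $\tilde{H}_{j'}^{(q)}$ then annihilates every term, so $V\subseteq V^{\perp_h}$. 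For part~(3): from the characterization of $V^{\perp_h}$, a constant $\mathbf{w}(D)=\mathbf{w}_0$ lies in $V^{\perp_h}$ iff $H(\mathbf{w}_0^{q})^{t}=\mathbf{0}$, i.e.\ $\mathbf{w}_0\in\mathcal{C}^{(q)}:=\{\mathbf{c}^{q}:\mathbf{c}\in\mathcal{C}\}$, a linear code with the same weight distribution as $\mathcal{C}$; choosing $\mathbf{w}_0$ of weight $d$ there exhibits a weight-$d$ codeword of $V^{\perp_h}$, giving $d_f^{\perp_h}\le d$. For the lower bound let $\mathbf{w}(D)\in V^{\perp_h}$ be nonzero, and let $a_0\le a_1$ be the least and greatest $t$ with $\mathbf{w}_t\ne\mathbf{0}$: the relation at $a=a_0-\mu$ collapses to $\tilde{H}_\mu(\mathbf{w}_{a_0}^{q})^{t}=\mathbf{0}$ and the one at $a=a_1$ to $\tilde{H}_0(\mathbf{w}_{a_1}^{q})^{t}=\mathbf{0}$, i.e.\ $\mathbf{w}_{a_0}^{q}\in\mathcal{C}_\mu$ and $\mathbf{w}_{a_1}^{q}\in\mathcal{C}_0$; using $\mathrm{wt}(\mathbf{w}(D))=\sum_t\mathrm{wt}(\mathbf{w}_t)$, if $a_0<a_1$ we get $\mathrm{wt}(\mathbf{w}(D))\ge\mathrm{wt}(\mathbf{w}_{a_0})+\mathrm{wt}(\mathbf{w}_{a_1})\ge d_\mu+d_0$, while if $a_0=a_1$ all relations at $a=a_0,a_0-1,\dots,a_0-\mu$ force $H(\mathbf{w}_{a_0}^{q})^{t}=\mathbf{0}$, so $\mathbf{w}_{a_0}^{q}\in\mathcal{C}$ and $\mathrm{wt}(\mathbf{w}(D))\ge d$; altogether $\min\{d_0+d_\mu,\,d\}\le d_f^{\perp_h}\le d$. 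Finally, every nonzero $\mathbf{v}(D)=\mathbf{u}(D)G(D)\in V$ has each coefficient $\mathbf{v}_t$ an $\mathbb{F}_{q^2}$-combination of rows of $H$, hence lying in the row space $\mathcal{C}^{\perp}$ of $H$; since $\mathbf{v}\mapsto\mathbf{v}^{q}$ is a weight-preserving bijection from $\mathcal{C}^{\perp}$ onto $\mathcal{C}^{\perp_h}$, a nonzero $\mathbf{v}_t$ has weight at least $d^{\perp_h}$, so $\mathrm{wt}(\mathbf{v}(D))\ge\mathrm{wt}(\mathbf{v}_t)\ge d^{\perp_h}$, i.e.\ $d_f\ge d^{\perp_h}$.

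I do not expect a deep obstacle here --- the whole argument is ``first principles.'' The two points that need care are: (i) genuinely using $\mathrm{rk}\,H=n-k$, since both the basicity and the reducedness of $G(D)$ fail without it and it is easy to let the hypothesis slip past; and (ii) keeping the Frobenius bookkeeping in $\langle\cdot\,|\,\cdot\rangle_h$ straight while remembering that the convolutional weight has the two equal descriptions $\mathrm{wt}(\mathbf{v}(D))=\sum_i\mathrm{wt}(v_i(D))=\sum_t\mathrm{wt}(\mathbf{v}_t)$, both counting the nonzero (coordinate, degree) pairs. These are exactly the inequalities that Section~3 will feed into the generalized Singleton bound to certify that the quantum convolutional codes constructed there are MDS.
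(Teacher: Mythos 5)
The paper does not prove this theorem at all: it is quoted verbatim (with label \ref{juanji}) from the cited source \cite[Theorem 3]{AGKM}, so there is no in-paper argument to compare against. Your reconstruction is correct and is essentially the standard argument behind that result: basicity and reducedness of $G(D)$ both come down to the linear independence of the rows of $H$ (full rank of $G(\lambda)$ for all $\lambda$ in the algebraic closure, and full rank of the high-order coefficient matrix, whose rows are distinct rows of $H$); the coefficient-wise identity $\langle\mathbf{u}(D)G(D)\,|\,\mathbf{w}(D)\rangle_h=\sum_a\mathbf{u}_a\cdot\bigl(\sum_j\tilde{H}_j(\mathbf{w}_{a+j}^q)^t\bigr)$ correctly characterizes $V^{\perp_h}$ and yields both part (2) (via $H(H^{(q)})^t=0$) and part (3) (via the boundary relations at $a=a_1$ and $a=a_0-\mu$, and the constant-codeword argument for $d_f^{\perp_h}\le d$ and $d_f\ge d^{\perp_h}$). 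You also correctly flag and repair the two small infelicities in the statement as printed here: $H_0$ having exactly $\kappa$ rows follows from $\mathrm{rk}\,H=n-k$, and $\mathbf{u}(D)$ should range over $\mathbb{F}_{q^2}[D]^{\kappa}$ rather than $\mathbb{F}_{q^2}[D]^{n-k}$. The only ingredients you import without proof are Forney's standard criteria for basic and reduced encoders, which is an acceptable level of granularity given that the paper itself imports the entire theorem.
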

Theorem \ref{juanji} suggests that one can obtain classical convolutional codes through linear codes over $\mathbb{F}_{q^2}$.
Constacyclic codes constitute a remarkable generalization of cyclic codes, hence form an important
class of linear codes in the coding theory. In this paper, we  apply Theorem \ref{juanji} to constacyclic codes.
The necessary notations and results about constacyclic codes are reviewed in the next subsection.

\subsection{Constacyclic codes}
Since we will work with codes endowed with the Hermitian inner product, we need to consider codes over $\mathbb{F}_{q^2}$,
where   $\mathbb{F}_{q^2}$ denotes
the finite field with $q^2$ elements. Let $\mathbb{F}_{q^2}^*=\mathbb{F}_{q^2}\setminus\{0\}$.
For $\lambda\in \mathbb{F}_{q^2}^*$,  we denote by $r=\rm{ord}(\lambda)$
the order of $\lambda$ in the cyclic group $\mathbb{F}_{q^2}^*$, i.e., $r$ is the smallest positive  integer  such that $\lambda^r=1$.
Then $r$ is a divisor of $q^2-1$, and $\lambda$
is called a {\em primitive $r$th root of unity}.

Starting from this section till the end of this paper, we assume that $n$ is a positive integer relatively prime to $q$.
A {\it $\lambda$-constacyclic code} $\mathcal{C}$ of length $n$ over $\mathbb{F}_{q^2}$ is an ideal of
the quotient ring $\mathbb{F}_{q^2}[X]/\langle X^n-\lambda\rangle$, where $\lambda\in \mathbb{F}_{q^2}^*$ (e.g.,  see \cite{Dinh12} or \cite{Chen1}).
It is well known that a unique monic polynomial $g(X)\in \mathbb{F}_{q^2}[X]$ can be found such that $g(X)\mid(X^n-1)$
and $\mathcal{C}=\langle g(X)\rangle=\{f(X)g(X)\,|\,f(X)\in \mathbb{F}_{q^2}[X]\}$.
In this case, $g(X)$ is called the {\it generator polynomial} of $\mathcal{C}$.

Assume  that   $\lambda\in \mathbb{F}_{q^2}^*$ is a primitive $r$th root of unity.
As mentioned before,  $r$ is a divisor of $q^2-1$. In particular, $\gcd(r,q)=1$, so $\gcd(rn,q)=1$.
We denote by $\ell={\rm ord}_{rn}(q^2)$, i.e.,  $\ell$ is the smallest positive integer such that $rn\mid(q^{2\ell}-1)$.
Then there exists a primitive $rn$th
root of unity $\beta\in \mathbb{F}_{q^{2\ell}}$ such that $\beta^n=\lambda$.
The roots of $X^n-\lambda$ are precisely the elements  $\beta^{1+ri}$ for $0\leq i\leq n-1$.
Set $\theta_{r,n}=\{1+ri\,|\,0\leq i\leq n-1\}$.
The {\it defining set} of a constacyclic code $\mathcal{C}=\langle g(X)\rangle$ of length $n$ is the set
$Z=\{j\in \theta_{r,n}\,|\,\beta^j ~\hbox{is a root of} ~g(X)\}$.
It is easy to see that the defining set $Z$ is  a union of some $q^2$-cyclotomic cosets modulo $rn$ and
$\dim_{\mathbb{F}_{q^2}}(\mathcal{C})=n-|Z|$ (see \cite{Yang} or \cite{Kai14}).
Since $\ell={\rm ord}_{rn}(q^2)$, it follows that the size of each $q^2$-cyclotomic cosets modulo $rn$
is a divisor of $\ell$ (e.g. see \cite[Theorem 4.1.4]{huff}).

The following
theorem gives the BCH bound for constacyclic codes (see  \cite[Theorem 4.1]{Yang}).
\begin{Theorem}({\bf The BCH bound for constacyclic codes})\label{BCH}
Let $\mathcal{C}$ be a $\lambda$-constacyclic code of length $n$ over $\mathbb{F}_{q^2}$, where $\lambda\in \mathbb{F}_{q^2}$ is a primitive $r$th root of unity.
Suppose $\ell={\rm ord}_{rn}(q^2)$.
Let $\beta\in \mathbb{F}_{q^{2\ell}}$ be  a primitive $rn$th root of unity  such that $\beta^n=\lambda$.
Assume that the generator polynomial of $\mathcal{C}$ has roots that include the set $\{\beta\zeta^i\,|\, i_1\leq i\leq i_1+d-2\}$, where
$\zeta=\beta^r$. Then the minimum distance of $\mathcal{C}$ is at least $d$.
\end{Theorem}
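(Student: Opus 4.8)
The plan is to run the classical Vandermonde-determinant argument for the BCH bound, adapted to the constacyclic situation. First I would fix an arbitrary nonzero codeword and write its representative of degree less than $n$ as $c(X)=\sum_{t=1}^{w}c_{k_t}X^{k_t}$ with $c_{k_t}\in\mathbb F_{q^2}^*$ and $0\le k_1<k_2<\cdots<k_w\le n-1$, so that $w={\rm wt}(c(X))$. Because the generator polynomial $g(X)$ divides $X^n-\lambda$, each codeword is a multiple of $g(X)$ in $\mathbb F_{q^2}[X]$, hence vanishes at every root of $g(X)$; in particular $c(\beta\zeta^i)=0$ for all $i$ with $i_1\le i\le i_1+d-2$.

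Next, I would argue by contradiction, assuming $w\le d-1$. Evaluating $c$ at the first $w$ of the prescribed roots $\beta\zeta^{i_1},\beta\zeta^{i_1+1},\dots,\beta\zeta^{i_1+w-1}$ and factoring out the obvious units yields the homogeneous linear system
$$\sum_{t=1}^{w}\bigl(c_{k_t}\beta^{k_t}\zeta^{i_1 k_t}\bigr)(\zeta^{k_t})^{j}=0,\qquad j=0,1,\dots,w-1,$$
in the quantities $b_t:=c_{k_t}\beta^{k_t}\zeta^{i_1 k_t}$. Its coefficient matrix is the $w\times w$ Vandermonde matrix with nodes $\zeta^{k_1},\dots,\zeta^{k_w}$, whose determinant equals $\prod_{1\le s<t\le w}(\zeta^{k_t}-\zeta^{k_s})$.

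The one place where the constacyclic hypotheses enter is in showing this determinant is nonzero. Since $\beta$ is a primitive $rn$th root of unity, $\zeta=\beta^r$ has multiplicative order exactly $n$; therefore, for $0\le k_s<k_t\le n-1$ we have $0<k_t-k_s<n$, so $\zeta^{k_t-k_s}\neq1$ and hence $\zeta^{k_t}\neq\zeta^{k_s}$. Thus the $w$ nodes are pairwise distinct, the Vandermonde determinant is nonzero, and the system forces $b_t=0$ for every $t$. As $\beta^{k_t}$ and $\zeta^{i_1 k_t}$ are nonzero, this gives $c_{k_t}=0$, contradicting $c_{k_t}\in\mathbb F_{q^2}^*$. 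Hence no nonzero codeword has weight at most $d-1$, i.e. the minimum distance of $\mathcal C$ is at least $d$.

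I expect essentially no serious obstacle here: the only slightly delicate bookkeeping is the factorization that rewrites the evaluation conditions as a Vandermonde system, together with the verification that ${\rm ord}(\zeta)=n$; the statement is moreover vacuous unless $d\ge2$, in which case there genuinely are $d-1\ge1$ consecutive prescribed roots at our disposal.
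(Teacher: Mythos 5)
Your argument is correct and complete: the reduction of the $d-1$ evaluation conditions $c(\beta\zeta^{i_1+j})=0$ to a Vandermonde system in the nodes $\zeta^{k_t}$, together with the observation that $\zeta=\beta^r$ has exact order $n$ (so the nodes attached to distinct exponents $0\le k_t\le n-1$ are distinct), is precisely the standard proof of the BCH bound in the constacyclic setting. Note that the paper itself does not prove this statement; it is quoted from Yang--Cai (their Theorem 4.1), and your write-up reproduces the argument given there, so there is nothing to reconcile between the two.
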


The {\it Hermitian inner product} on $\mathbb{F}_{q^2}^n$
is defined as
$$
(\mathbf{x},\mathbf{y})_h=x_0y_0^q+x_1y_1^q+\cdots+x_{n-1}y_{n-1}^q,
$$
where $\mathbf{x}=(x_0,x_1,\cdots,x_{n-1})\in \mathbb{F}_{q^2}^n$ and  $\mathbf{y}=(y_0,y_1,\cdots,y_{n-1})\in \mathbb{F}_{q^2}^n$.
For a linear code $\mathcal{C}$ of length $n$ over $\mathbb{F}_{q^2}$, the {\it Hermitian dual code} of $\mathcal{C}$ is defined as
$$
\mathcal{C}^{\perp_{h}}=\Big\{\mathbf{x}\in \mathbb{F}_{q^2}^n\,\Big{|}\,\sum\limits_{i=0}^{n-1}x_iy_i^q=0,~~~\hbox{for ~any~$\mathbf{y}\in \mathcal{C}$}\Big\}.
$$
If $\mathcal{C}\subseteq \mathcal{C}^{\perp_{h}}$, then $\mathcal{C}$ is called a
(Hermitian)   self-orthogonal code.
Conversely, if $\mathcal{C}^{\perp_{h}}\subseteq \mathcal{C}$, we say that $\mathcal{C}$ is a (Hermitian) dual-containing code.
For a $\lambda$-constacyclic code $\mathcal{C}$ of length $n$ over $\mathbb{F}_{q^2}$, it is shown that
$\mathcal{C}^{\perp_h}$ is a $\lambda^{-q}$-constacyclic
code; further, $\lambda=\lambda^{-q}$ precisely when $r\mid (q+1)$ (\cite[Lemma 2.1(ii)]{Yang}).

The following results are   useful.

\begin{lem}(See \cite[Lemma 2.2]{Kai14})\label{dual-containong}
Let $\lambda\in \mathbb{F}_{q^2}^*$ be a primitive $r$th root of unity. Assume that $\mathcal{C}$ is a $\lambda$-constacyclic
code of length $n$ over $\mathbb{F}_{q^2}$ with defining set $Z$. Then $\mathcal{C}$ is a dual-containing code if and only if $Z\bigcap Z^{-q}=\emptyset$,
where $Z^{-q}=\{-qz ~(\bmod ~rn)\,|\,z\in Z\}$.
\end{lem}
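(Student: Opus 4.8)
The plan is to reduce the inclusion $\mathcal{C}^{\perp_h}\subseteq\mathcal{C}$ to a statement about defining sets, by first computing the defining set of $\mathcal{C}^{\perp_h}$ explicitly. Throughout, $\beta\in\mathbb{F}_{q^{2\ell}}$ is the fixed primitive $rn$th root of unity with $\beta^n=\lambda$, all index sets are regarded inside $\theta_{r,n}\subseteq\mathbb{Z}_{rn}$, and we use freely that $\gcd(rn,q)=1$, so that multiplication by $q$ and by $-1$ are bijections of $\mathbb{Z}_{rn}$. Recall from \cite[Lemma 2.1(ii)]{Yang} that $\mathcal{C}^{\perp_h}$ is $\lambda^{-q}$-constacyclic; in the situation of interest $r\mid q+1$, so $\lambda^{-q}=\lambda$ and $\mathcal{C}^{\perp_h}$ is again a $\lambda$-constacyclic code, hence has a defining set contained in $\theta_{r,n}$. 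The starting point is the elementary identity $\mathcal{C}^{\perp_h}=(\overline{\mathcal{C}})^{\perp_e}$, where $\perp_e$ denotes the Euclidean dual and $\overline{\mathcal{C}}=\{(c_0^q,\dots,c_{n-1}^q)\mid(c_0,\dots,c_{n-1})\in\mathcal{C}\}$ is the conjugate code; it follows at once from the fact that $(\mathbf{x},\mathbf{y})_h=\sum_i x_iy_i^q$ is the ordinary (Euclidean) inner product of $\mathbf{x}$ with $\overline{\mathbf{y}}$.

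The main step is to show that $\mathcal{C}^{\perp_h}$ has defining set $\theta_{r,n}\setminus Z^{-q}$. I would split this into two parts. (a)~The conjugate code $\overline{\mathcal{C}}$ is $\lambda^{q}$-constacyclic, since writing a codeword of $\mathcal{C}$ as a polynomial modulo $X^n-\lambda$ and conjugating its coefficients carries the $\lambda$-shift to the $\lambda^{q}$-shift; and $\lambda^{q+1}=1$ gives $\lambda^{q}=\lambda^{-1}$, so $\overline{\mathcal{C}}$ is $\lambda^{-1}$-constacyclic with respect to the primitive $rn$th root $\beta^{-1}$. The Frobenius identity $\big(\sum_i c_i\gamma^{i}\big)^q=\sum_i c_i^q\gamma^{iq}$ then yields $\overline{c}(\beta^{-j})=\big(c(\beta^{-jq^{-1}})\big)^q$ for all $c\in\mathcal{C}$ and all $j$ (with $q^{-1}$ taken modulo $rn$); hence $\beta^{-j}$ is a common root of $\overline{\mathcal{C}}$ precisely when $-jq^{-1}\bmod rn$ lies in $Z$, i.e.\ precisely when $j\in\{-qz\bmod rn\mid z\in Z\}=Z^{-q}$. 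Thus the defining set of $\overline{\mathcal{C}}$ relative to $\beta^{-1}$ equals $Z^{-q}$; in particular $Z^{-q}\subseteq\theta_{r,n}$, and this is exactly where $r\mid q+1$ enters. (b)~Passing to the Euclidean dual of a constacyclic code replaces its generator polynomial by the reciprocal of its check polynomial, which sends a root $\gamma^{j}$ to $\gamma^{-j}$; hence if $\mathcal{D}$ is constacyclic with defining set $S$ relative to a root $\gamma$, then $\mathcal{D}^{\perp_e}$ has defining set $\theta_{r,n}\setminus S$ relative to $\gamma^{-1}$. Taking $\gamma=\beta^{-1}$ and $S=Z^{-q}$ shows that $\mathcal{C}^{\perp_h}=(\overline{\mathcal{C}})^{\perp_e}$ has defining set $\theta_{r,n}\setminus Z^{-q}$ relative to $(\beta^{-1})^{-1}=\beta$.

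It then remains to compare defining sets. For two $\lambda$-constacyclic codes taken relative to the same $\beta$, one is contained in the other if and only if the inclusion of their defining sets is reversed, because the generator polynomial of the larger code divides that of the smaller. Therefore $\mathcal{C}^{\perp_h}\subseteq\mathcal{C}$ is equivalent to $Z\subseteq\theta_{r,n}\setminus Z^{-q}$, and since $Z\subseteq\theta_{r,n}$ always holds, this is equivalent to $Z\cap Z^{-q}=\emptyset$, which is the claim. The delicate point, and where I expect to have to be most careful, is the bookkeeping in (a)--(b): keeping straight the roles of $q$ versus $q^{-1}$ and of $\beta$ versus $\beta^{-1}$ under both conjugation and the reciprocal-polynomial operation, together with the observation that the hypothesis $r\mid q+1$ is precisely what forces $Z^{-q}\subseteq\theta_{r,n}$, so that $\mathcal{C}^{\perp_h}$ is genuinely $\lambda$-constacyclic and the final comparison is legitimate. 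As a consistency check, $Z^{-q}$ is automatically a union of $q^{2}$-cyclotomic cosets modulo $rn$, being the defining set of an actual code, which also follows from $q\cdot C_s=C_{qs}$, where $C_s$ denotes the $q^{2}$-cyclotomic coset of $s$.
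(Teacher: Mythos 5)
Your proof is correct. Note first that the paper does not prove this lemma at all --- it is quoted verbatim from \cite[Lemma 2.2]{Kai14} --- so there is no in-paper argument to compare against; what you have written is essentially the standard derivation and it holds up. The chain $\mathcal{C}^{\perp_h}=(\overline{\mathcal{C}})^{\perp_e}$, the computation $\overline{c}(\beta^{-j})=\big(c(\beta^{-jq^{-1}})\big)^q$ identifying the defining set of $\overline{\mathcal{C}}$ relative to $\beta^{-1}$ as $Z^{-q}$, the reciprocal-polynomial description of the Euclidean dual giving defining set $\theta_{r,n}\setminus Z^{-q}$ for $\mathcal{C}^{\perp_h}$ relative to $\beta$, and the final reversal-of-inclusion step are all correct, and the cardinality check $|Z^{-q}|=|Z|$ is consistent with $\dim\mathcal{C}^{\perp_h}=|Z|$. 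You are also right to flag that the hypothesis $r\mid(q+1)$ is doing real work: it is exactly what makes $-qz\equiv 1\pmod r$ for $z\in\theta_{r,n}$, hence $Z^{-q}\subseteq\theta_{r,n}$ and $\lambda^{-q}=\lambda$, so that $\mathcal{C}^{\perp_h}$ lives in the same quotient ring as $\mathcal{C}$. Without it the stated equivalence fails (e.g.\ $Z=\theta_{r,n}$ gives $\mathcal{C}=0$ with $Z\cap Z^{-q}=\emptyset$ whenever $-q\not\equiv1\pmod r$), so the lemma as transcribed silently assumes $r\mid(q+1)$; this is harmless here since every application in the paper takes $r=q+1$, but your proof is the more careful statement.
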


\begin{lem}(See \cite[Theorem 5.4]{GGGlinear} or \cite[Theorem 4.2]{GGGarxiv} )\label{check}
Let $\lambda\in \mathbb{F}_{q^2}$ be a primitive $r$th root of unity. Suppose $\ell={\rm ord}_{rn}(q^2)$.
 Take a primitive $rn$th root of unity $\beta\in \mathbb{F}_{q^{2\ell}}$
such that $\beta^n=\lambda$. Assume that $\mathcal{C}$ is a $\lambda$-constacyclic code of length
$n$ over $\mathbb{F}_{q^2}$ with defining set
$Z=\bigcup_{i=b}^{\delta-2}C_{1+ri}$, where $b$ is a nonnegative integer and $C_{1+ri}$, $b\leq i\leq\delta-2$, are distinct $q^2$-cyclotomic cosets modulo $rn$. Then a parity check matrix of $\mathcal{C}$ can be obtained from the matrix
$$
H_{\mathcal{C}}=\left(
  \begin{array}{ccccc}
    1      &       \beta^{1+rb}  & \beta^{2(1+rb)} & \cdots & \beta^{(n-1)(1+rb)} \\
    1      &       \beta^{1+r(b+1)}  & \beta^{2(1+r(b+1))} & \cdots & \beta^{(n-1)(1+r(b+1))} \\
    \vdots &       \vdots                            & \vdots & \vdots & \vdots \\
    1      &       \beta^{1+r(\delta-3)}   & \beta^{2(1+r(\delta-3))} & \cdots & \beta^{(n-1)(1+r(\delta-3))} \\
     1      &       \beta^{1+r(\delta-2)}   & \beta^{2(1+r(\delta-2))} & \cdots & \beta^{(n-1)(1+r(\delta-2))} \\
  \end{array}
\right)
$$
by expanding each entry as a column vector (containing $\ell$ rows) with respect to certain $\mathbb{F}_{q^2}$-basis of
$\mathbb{F}_{q^{2\ell}}$ and then removing any linearly dependent rows.
\end{lem}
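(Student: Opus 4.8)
The plan is to check directly that, over $\mathbb{F}_{q^2}$, the null space of the expanded version of $H_{\mathcal{C}}$ is exactly $\mathcal{C}$, and then to observe that deleting linearly dependent rows affects neither the null space nor, after deletion, the property of having full row rank.

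First I would set up the correspondence. Write $\mathcal{C}=\langle g(X)\rangle$; since $g(X)\mid(X^n-\lambda)$ and $\beta$ is a primitive $rn$th root of unity, the roots $\beta^j$, $j\in Z$, are pairwise distinct, so $g(X)=\prod_{j\in Z}(X-\beta^j)$. Identifying $\mathbf{c}=(c_0,\dots,c_{n-1})\in\mathbb{F}_{q^2}^n$ with $c(X)=\sum_{k=0}^{n-1}c_kX^k$, membership $\mathbf{c}\in\mathcal{C}$ is equivalent to $g(X)\mid c(X)$ in $\mathbb{F}_{q^2}[X]$, i.e. to $c(\beta^j)=0$ for every $j\in Z$. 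Next, the coordinate of $H_{\mathcal{C}}\mathbf{c}^t$ (computed over $\mathbb{F}_{q^{2\ell}}$) indexed by $i$ is precisely $\sum_{k=0}^{n-1}c_k\beta^{k(1+ri)}=c(\beta^{1+ri})$, so $H_{\mathcal{C}}\mathbf{c}^t=0$ iff $c(\beta^{1+ri})=0$ for all $b\le i\le\delta-2$.

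The crux is that, for $\mathbf{c}\in\mathbb{F}_{q^2}^n$, this last condition is equivalent to $\mathbf{c}\in\mathcal{C}$. One implication is immediate, since $\{1+ri:b\le i\le\delta-2\}\subseteq Z$. For the converse, use that $c(X)$ has coefficients in $\mathbb{F}_{q^2}$, hence $c(\alpha^{q^2})=c(\alpha)^{q^2}$ for every $\alpha$; thus if $\beta^{1+ri}$ is a root of $c(X)$ then so is $\beta^{(1+ri)q^{2s}}$ for all $s\ge0$, i.e. $c$ vanishes on $\beta^j$ for every $j$ in the $q^2$-cyclotomic coset $C_{1+ri}$ modulo $rn$. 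Since by hypothesis $Z=\bigcup_{i=b}^{\delta-2}C_{1+ri}$, we conclude $c(\beta^j)=0$ for all $j\in Z$, whence $g(X)\mid c(X)$ and $\mathbf{c}\in\mathcal{C}$.

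Finally I would descend from $\mathbb{F}_{q^{2\ell}}$ to $\mathbb{F}_{q^2}$: fixing an $\mathbb{F}_{q^2}$-basis $v_1,\dots,v_\ell$ of $\mathbb{F}_{q^{2\ell}}$ and writing $\beta^{k(1+ri)}=\sum_{t=1}^{\ell}a^{(i)}_{k,t}v_t$ with $a^{(i)}_{k,t}\in\mathbb{F}_{q^2}$, replace each row of $H_{\mathcal{C}}$ by the $\ell$ rows $\big(a^{(i)}_{0,t},\dots,a^{(i)}_{n-1,t}\big)$, $1\le t\le\ell$, to obtain a matrix $H'$ over $\mathbb{F}_{q^2}$. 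For $\mathbf{c}\in\mathbb{F}_{q^2}^n$ one has $\sum_kc_k\beta^{k(1+ri)}=\sum_t\big(\sum_kc_ka^{(i)}_{k,t}\big)v_t$, and since the $v_t$ are $\mathbb{F}_{q^2}$-linearly independent and the inner sums lie in $\mathbb{F}_{q^2}$, this vanishes iff $\sum_kc_ka^{(i)}_{k,t}=0$ for every $t$. Hence $H'\mathbf{c}^t=0\iff H_{\mathcal{C}}\mathbf{c}^t=0\iff\mathbf{c}\in\mathcal{C}$, so $\mathcal{C}$ is the null space of $H'$; deleting linearly dependent rows leaves the null space unchanged and yields a full-rank matrix, which is therefore a parity check matrix of $\mathcal{C}$ (necessarily with $n-\dim_{\mathbb{F}_{q^2}}\mathcal{C}=|Z|$ rows). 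The only step requiring real care is the $q^2$-conjugacy argument: it uses crucially that $\mathbf{c}$ has entries in $\mathbb{F}_{q^2}$ and that $Z$ is a union of \emph{full} $q^2$-cyclotomic cosets, so that testing a single representative $1+ri$ per coset suffices; everything else is routine manipulation of Vandermonde-type matrices and the standard dictionary between constacyclic codes and their defining sets.
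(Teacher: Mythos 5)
Your proof is correct and complete: the reduction to one representative per $q^2$-cyclotomic coset via $c(\alpha^{q^2})=c(\alpha)^{q^2}$ for $c(X)\in\mathbb{F}_{q^2}[X]$, together with the basis-expansion step showing $H'\mathbf{c}^t=0\iff H_{\mathcal{C}}\mathbf{c}^t=0$ for $\mathbf{c}\in\mathbb{F}_{q^2}^n$, is exactly the content of the statement. The paper itself offers no proof of this lemma (it is imported by citation from \cite{GGGlinear}), and your argument is the standard one given in that source, so there is nothing to contrast.
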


\subsection{Quantum convolutional codes}
A quantum convolutional code is defined through
its stabilizer,  which is a subgroup of the infinite version of
the Pauli group, consisting of tensor products of generalized
Pauli matrices acting on a semi-infinite stream of qudits. The
stabilizer can be defined by a stabilizer matrix of the form
$$
S(D)=\Big(X(D)\,\Big{|}\,Z(D)\Big)\in\mathbb{ F}_q[D]^{(n-k)\times2n}
$$
satisfying $X(D)Z(1/D)^t-Z(D)X(1/D)^t=0$. Let $C$ be a quantum convolutional
code  defined by a full-rank stabilizer matrix
$S(D)$ given above. Then $C$ has parameters $[(n,k,\mu;\gamma,d_f)]_q$,
where $n$ is the frame
size, $k$ is the number of logical qudits per frame,
$\mu=\max_{1\leq i\leq n-k, 1\leq j\leq n}{\{\max{\{\deg X_{ij}(D),\deg Z_{ij}(D)\}}\}}$,
is the
memory,  $d_f$ is the free distance and $\gamma$ is the degree of the
code.

The next result enables us to construct  convolutional
stabilizer codes  from classical convolutional
codes.

\begin{lem}\label{c}
Let $V$ be an $(n,(n-k)/2,\gamma; \mu)_{q^2}$ convolutional code satisfying  $V\subseteq V^{\perp_h}$. Then there exists an $[(n,k,\mu;\gamma,d_f)]_q$
convolutional stabilizer code, where $d_f={\rm wt}(V^{\perp_h}\setminus V)$.
\end{lem}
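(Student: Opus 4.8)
The plan is to transport the Hermitian construction of stabilizer codes, familiar for block codes, to the convolutional setting. Fix an $\mathbb{F}_q$-basis $\{1,\omega\}$ of $\mathbb{F}_{q^2}$ and let $\psi\colon\mathbb{F}_{q^2}^n\to\mathbb{F}_q^{2n}$ be the $\mathbb{F}_q$-linear isomorphism sending $c=(c_1,\dots,c_n)$ with $c_j=a_j+\omega b_j$ to $(a_1,\dots,a_n\mid b_1,\dots,b_n)$. After rescaling $\psi$ by a suitable constant one has the standard trace identity: for the symplectic form $\langle(a\mid b),(a'\mid b')\rangle_s=a\cdot b'-b\cdot a'$ on $\mathbb{F}_q^{2n}$, the value $\langle\psi(c),\psi(c')\rangle_s$ is the $\mathbb{F}_q$-trace of a fixed nonzero multiple of $\sum_j c_j(c_j')^q$; in particular Hermitian orthogonality is sent to symplectic orthogonality. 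Extend $\psi$ coefficientwise to an $\mathbb{F}_q[D]$-linear isomorphism $\Psi\colon\mathbb{F}_{q^2}[D]^n\to\mathbb{F}_q[D]^{2n}$. Note that $\Psi$ preserves the degree of every entry and carries ${\rm wt}(\cdot)$ (as defined above) to the symplectic weight that governs the free distance, because a scalar $c_j\in\mathbb{F}_{q^2}$ is zero exactly when its image pair $(a_j,b_j)$ is zero.

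First I would realize $V$ as an $\mathbb{F}_q[D]$-module. Let $G(D)\in\mathbb{F}_{q^2}[D]^{m\times n}$ with $m=(n-k)/2$ be a reduced basic generator matrix of $V$. Since $\{1,\omega\}$ is an $\mathbb{F}_q$-basis, the $n-k$ rows $g_1(D),\dots,g_m(D),\omega g_1(D),\dots,\omega g_m(D)$ are $\mathbb{F}_q[D]$-linearly independent, their high-order coefficient matrix has full rank, and they span an $\mathbb{F}_q[D]$-direct summand of $\mathbb{F}_q[D]^{2n}$ (because $V$ is an $\mathbb{F}_{q^2}[D]$-direct summand of $\mathbb{F}_{q^2}[D]^n$); hence they constitute a reduced basic generator matrix $\widetilde G(D)$ of $V$ over $\mathbb{F}_q[D]$. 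Put $S(D):=\Psi(\widetilde G(D))=(X(D)\mid Z(D))\in\mathbb{F}_q[D]^{(n-k)\times 2n}$, which has full rank $n-k$ since $\widetilde G(D)$ is basic and $\Psi$ is injective.

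Next I would verify that $S(D)$ is a stabilizer matrix, i.e.\ $X(D)Z(1/D)^t-Z(D)X(1/D)^t=0$. The hypothesis $V\subseteq V^{\perp_h}$ says the rows of $G(D)$ are pairwise Hermitian orthogonal; since $V$ and $V^{\perp_h}$ are $\mathbb{F}_{q^2}[D]$-modules, testing orthogonality against all shifts upgrades this to the full polynomial identity ``$\mathbf u(D)\cdot\overline{\mathbf v(1/D)}^{\,t}=0$'' for all $\mathbf u,\mathbf v\in V$, where the bar denotes raising every coefficient to the $q$th power. Applying $\psi$ coefficient by coefficient together with the trace identity of the first paragraph turns this into exactly the required symplectic commutation relation for the rows of $S(D)$. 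Hence $S(D)$ defines a quantum convolutional code $C$.

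Finally I would read off the parameters: the frame size is $n$ and the number of logical qudits per frame is $n-(n-k)=k$; the memory $\mu$ and degree $\gamma$ of $C$ are inherited from $V$ because $\Psi$ and multiplication by $\omega$ preserve entry degrees. For the free distance, $\Psi(V)^{\perp_s}=\Psi(V^{\perp_h})$ — the inclusion from the trace identity, the equality by an $\mathbb{F}_q[D]$-rank count — so the logical operators of $C$ are the elements of $\Psi(V^{\perp_h})\setminus\Psi(V)=\Psi(V^{\perp_h}\setminus V)$, and weight-preservation of $\Psi$ yields $d_f={\rm wt}(C)={\rm wt}(V^{\perp_h}\setminus V)$. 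The main obstacle is the third step: converting the coefficient-level Hermitian pairing that defines $V^{\perp_h}$ into the genuine polynomial symplectic relation demanded of a stabilizer matrix, and checking that basicness, reducedness, memory and degree all survive the restriction of scalars from $\mathbb{F}_{q^2}[D]$ to $\mathbb{F}_q[D]$; once that is in place the weight and parameter bookkeeping is routine.
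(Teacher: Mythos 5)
The paper offers no proof of this lemma to compare against: it is imported verbatim from the literature (it is the Hermitian construction of convolutional stabilizer codes due to Aly, Klappenecker and Sarvepalli, \cite{AKS}), so your argument has to be judged on its own. It is the standard proof and is essentially correct and complete. The trace identity is right: one checks $\langle\psi(c),\psi(c')\rangle_s=\operatorname{Tr}_{\mathbb{F}_{q^2}/\mathbb{F}_q}\bigl((\omega^q-\omega)^{-1}\textstyle\sum_j c_j(c'_j)^q\bigr)$, using that $(\omega^q-\omega)^{-1}$ lies in the $(-1)$-eigenspace of Frobenius. Restriction of scalars along $\{1,\omega\}$ does yield a basic reduced generator matrix over $\mathbb{F}_q[D]$ (the $2m$ leading-coefficient vectors are $\mathbb{F}_q$-independent exactly because the $m$ original ones are $\mathbb{F}_{q^2}$-independent, and a direct summand over $\mathbb{F}_{q^2}[D]$ restricts to one over $\mathbb{F}_q[D]$); the shift argument correctly upgrades the coefficientwise Hermitian orthogonality in the paper's definition of $\langle\,\cdot\,|\,\cdot\,\rangle_h$ to the polynomial identity $X(D)Z(1/D)^t-Z(D)X(1/D)^t=0$; and the rank count gives $\Psi(V)^{\perp_s}=\Psi(V^{\perp_h})$, whence $d_f=\mathrm{wt}(V^{\perp_h}\setminus V)$. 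The one point you pass over too quickly is the degree $\gamma$: the sum of the row degrees of your $S(D)$ is $2\gamma$, since each row $g_i$ of $G(D)$ contributes twice (as $g_i$ and as $\omega g_i$). So ``the degree is inherited'' needs the convention --- implicit in the paper and in \cite{AKS}, and the one under which the Singleton-bound computations in Theorems \ref{main1} and \ref{main2} come out correctly --- that the degree of a convolutional stabilizer code is the degree of the associated Hermitian self-orthogonal $\mathbb{F}_{q^2}$-code $V$, i.e.\ half the overall constraint length of $S(D)$, rather than the raw sum of row degrees of $S(D)$. With that convention made explicit, your argument is a complete proof.
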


\begin{lem}(See \cite{AKS} or \cite{GGGlinear})\label{bound}
({\bf Quantum Singleton bound}) The free distance of an $[(n,k,\mu;\gamma,d_f)]_q$, $\mathbb{F}_{q^2}$-linear pure convolutional stabilizer
code is bounded by
$$
d_f\leq\frac{n-k}{2}\Big(\Big\lfloor\frac{2\gamma}{n+k}\Big\rfloor+1\Big)+\gamma+1.
$$
\end{lem}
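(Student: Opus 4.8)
The plan is to derive this bound from the classical generalized Singleton bound for convolutional codes, using the correspondence between $\mathbb{F}_{q^2}$-linear convolutional stabilizer codes and Hermitian self-orthogonal classical convolutional codes that underlies Lemma \ref{c} (cf. \cite{AKS}). First I would fix such a correspondence: an $\mathbb{F}_{q^2}$-linear pure convolutional stabilizer code with parameters $[(n,k,\mu;\gamma,d_f)]_q$ comes from a classical convolutional code $V$ over $\mathbb{F}_{q^2}$ with $V\subseteq V^{\perp_h}$, of length $n$, dimension $(n-k)/2$, degree $\gamma$, and with $d_f=\mathrm{wt}(V^{\perp_h}\setminus V)$. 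The hypothesis that the code is \emph{pure} is what I would exploit next: purity says the stabilizer part $V$ contains no nonzero word of weight below $d_f$, so since $V\subseteq V^{\perp_h}$ every nonzero word of $V^{\perp_h}$ is either in $V$ (hence of weight $\ge d_f$) or in $V^{\perp_h}\setminus V$ (minimum weight exactly $d_f$); therefore the free distance of the classical convolutional code $V^{\perp_h}$ equals $d_f$. (Without purity one only gets $d_f\ge d_f(V^{\perp_h})$, which is useless for an upper bound, so this step is essential.)

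Second, I would record the parameters of $V^{\perp_h}$: its length is $n$ and its dimension is $n-(n-k)/2=(n+k)/2$. The structural point I need here is that a convolutional code and its Hermitian dual have the \emph{same degree}: a reduced basic generator matrix of $V$ gives rise to a parity check matrix of the same internal (McMillan) degree, and passing to $V^{\perp_h}$ only applies the coefficientwise $q$-th power, which is a field automorphism and does not change degrees of minors. Hence $V^{\perp_h}$ is an $(n,(n+k)/2,\gamma)_{q^2}$ convolutional code.

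Finally, I would invoke the classical generalized Singleton bound (Rosenthal–Smarandache): an $(n,K,\delta)_{q^2}$ convolutional code has free distance at most $(n-K)\bigl(\lfloor \delta/K\rfloor+1\bigr)+\delta+1$. Substituting $K=(n+k)/2$, $n-K=(n-k)/2$, and $\delta=\gamma$ gives
$$
d_f \;=\; d_f\bigl(V^{\perp_h}\bigr)\;\le\;\frac{n-k}{2}\Bigl(\Bigl\lfloor\frac{2\gamma}{n+k}\Bigr\rfloor+1\Bigr)+\gamma+1,
$$
which is the claimed inequality. I expect the only genuinely delicate step to be the degree invariance under duality in the middle paragraph — that is where one must appeal to the algebraic theory of convolutional codes (the predictable degree property) rather than to bookkeeping; the first and third steps are immediate from the definitions and a direct substitution.
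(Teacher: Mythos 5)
The paper offers no proof of this lemma (it is quoted from \cite{AKS} and \cite{GGGlinear}), but your argument is exactly the standard one given in those sources: purity identifies $d_f$ with the free distance of the classical code $V^{\perp_h}$, which is an $(n,(n+k)/2,\gamma)_{q^2}$ convolutional code since degree is preserved under (Hermitian) duality by Forney's duality theorem, and the Rosenthal--Smarandache generalized Singleton bound $d_{free}\leq (n-K)(\lfloor\delta/K\rfloor+1)+\delta+1$ with $K=(n+k)/2$ and $\delta=\gamma$ then yields the stated inequality. Your substitution is correct and you have rightly flagged the degree-invariance of the dual as the one step that needs the structural theory rather than bookkeeping.
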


A quantum convolutional code
achieving  this quantum Singleton bound is called an  {\it maximum-distance-separable (MDS) quantum
convolutional code}.

\section{Code Constructions}
Thereafter, we  always assume that $q$ is an odd prime power.
In this section,  firstly, we  use  constacyclic codes of lengths
$n=q^2+1$ and $n=\frac{q^2+1}{10}$ (assume further that $10\mid(q^2+1)$) respectively to construct classical  convolutional codes.
Consequently,   two classes of MDS quantum convolutional codes are derived from these parameters.

\subsection{MDS quantum convolutional codes  of length $q^2+1$}
The main result of this subsection is Theorem \ref{main1}, which generates a  family of MDS quantum convolutional codes.
The following results are useful to the proof of Theorem \ref{main1}.

\begin{lem}\label{lem1}
Let $n=q^2+1$, $r=q+1$ and $s=1+r\frac{q-1}{2}=\frac{q^2+1}{2}=\frac{n}{2}$, where $q$ is an odd prime power.
Then $\theta_{r,n}=\{1+ri\,|\,0\leq i\leq n-1\}$ is a disjoint union of $q^2$-cyclotomic cosets modulo $rn$:
$$
\theta_{r,n}=C_s\bigcup C_{1+r(\frac{q-1}{2}+\frac{q^2+1}{2})}\bigcup\limits_{i=1}^{s-1}C_{s-ri}
$$
where $C_{s}=\{s\}$, $C_{1+r(\frac{q-1}{2}+\frac{q^2+1}{2})}=\{1+r(\frac{q-1}{2}+\frac{q^2+1}{2})\}$
and $C_{s-ri}=\{s-ri, s+ri\}$ for $1\leq i\leq s-1$.
\end{lem}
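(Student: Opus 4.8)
The plan is to carry out all computations in $\mathbb{Z}/rn\mathbb{Z}$ with $rn=(q+1)(q^2+1)$, using two elementary congruences as the engine. First, since $n=q^2+1$ we have $q^2\equiv-1\pmod{n}$, and multiplying by $ri$ and using $ri\cdot n=i(rn)\equiv 0\pmod{rn}$ gives $riq^2\equiv-ri\pmod{rn}$ for every $i$. Second, since $s=\frac{q^2+1}{2}$ and $s(q^2-1)=\frac{q-1}{2}\,rn$, we get $sq^2\equiv s\pmod{rn}$. A third, even simpler observation is that $\theta_{r,n}$ is stable under multiplication by $q^2$ modulo $rn$: every element of $\theta_{r,n}$ is $\equiv 1\pmod r$, and $r=q+1$ divides $q^2-1$, so $q^2\equiv 1\pmod r$; hence any $q^2$-cyclotomic coset modulo $rn$ that meets $\theta_{r,n}$ is entirely contained in it. Thus it suffices to identify the cosets of the stated representatives and then check one counting identity, since $|\theta_{r,n}|=n=2s$.

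Next I would compute the cosets. For $1\le i\le s-1$, combining the two congruences above yields $(s-ri)q^2=sq^2-riq^2\equiv s+ri\pmod{rn}$ and, symmetrically, $(s+ri)q^2\equiv s-ri\pmod{rn}$; hence $C_{s-ri}=\{s-ri,\,s+ri\}$. This coset has exactly two elements, because $s-ri\equiv s+ri\pmod{rn}$ would force $rn\mid 2ri$, i.e.\ $n\mid 2i$, which is impossible for $1\le i\le s-1=\frac{n}{2}-1$. In the same way $C_s=\{s\}$ from $sq^2\equiv s$, and writing $t:=1+r\bigl(\frac{q-1}{2}+\frac{q^2+1}{2}\bigr)=s+\frac{rn}{2}$ one checks $tq^2\equiv s+\frac{rn}{2}=t\pmod{rn}$ (using $\frac{rn}{2}q^2=rn\cdot\frac{q^2-1}{2}+\frac{rn}{2}\equiv\frac{rn}{2}$, with $\frac{q^2-1}{2}$ an integer), so $C_t=\{t\}$.

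It then remains to verify membership in $\theta_{r,n}$ and pairwise distinctness. Each representative is $\equiv 1\pmod r$, so by the stability remark its whole coset lies in $\theta_{r,n}$; concretely $s=1+r\frac{q-1}{2}$ and $t=1+r\frac{q(q+1)}{2}$ with $0\le\frac{q-1}{2},\ \frac{q(q+1)}{2}\le n-1$, while $s-ri$ reduces modulo $rn$ to $1+r\bigl((\tfrac{q-1}{2}-i)\bmod n\bigr)$. For distinctness one shows no two listed cosets coincide: $C_s\ne C_t$ since $\frac{rn}{2}\not\equiv 0\pmod{rn}$; both $C_s$ and $C_t$ differ from every $C_{s-ri}$ because $ri\not\equiv 0$ and $\frac{rn}{2}\pm ri\not\equiv 0\pmod{rn}$ (the latter would require $i\equiv\frac{n}{2}=s\pmod n$, impossible for $1\le i\le s-1$); and $C_{s-ri}=C_{s-rj}$ with $i\ne j$ would force either $i\equiv j$ or $i+j\equiv 0\pmod n$, both excluded since $1\le i,j\le s-1$ gives $|i-j|<n$ and $2\le i+j\le n-2$. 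Finally the listed cosets have total cardinality $1+1+2(s-1)=2s=n=|\theta_{r,n}|$, so being pairwise disjoint subsets of $\theta_{r,n}$ they exhaust it, which is precisely the claimed decomposition.

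I expect the main friction to be purely bookkeeping: keeping the reductions modulo $rn$ straight (especially for $s-ri$ when $i>\frac{q-1}{2}$) and translating each potential coset collision into a divisibility statement that the range $1\le i,j\le s-1$ rules out. The one genuinely computational ingredient is the identity $riq^2\equiv-ri\pmod{rn}$, which drives everything else.
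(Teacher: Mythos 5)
Your proof is correct and follows essentially the same route as the paper: the key congruence $q^2(s\pm ri)\equiv s\mp ri\pmod{rn}$ (the paper's $q^2(1+ri)\equiv 1+r(q-1-i)$ recentered at $s$) identifies the two singleton cosets and the doubletons $C_{s-ri}=\{s-ri,s+ri\}$, collisions are ruled out by the same divisibility arguments on the range $1\le i\le s-1$, and the decomposition is completed by the cardinality count $1+1+2(s-1)=n$. The observation that $\theta_{r,n}$ is stable under multiplication by $q^2$ is a nice explicit justification of a point the paper leaves implicit, but it does not change the argument.
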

\begin{proof}
Note that $rn=(q+1)(q^2+1)$, $rn\nmid(q^2-1)$ and $rn\mid(q^4-1)$, so ${\rm ord}_{rn}(q^2)=2$. We then know that every
$q^2$-cyclotomic coset modulo $rn$ has one or two elements.
A straightforward calculation shows that
$q^2(1+ri)\equiv1+r(q-1-i)\pmod{rn}$ for any integer $i$.
In particular,
$q^2(1+r\frac{q-1}{2})\equiv1+r\frac{q-1}{2}\pmod{rn}$ and
$q^2(1+r(\frac{q-1}{2}+\frac{q^2+1}{2}))\equiv1+r(\frac{q-1}{2}-\frac{q^2+1}{2})\equiv1+r(\frac{q-1}{2}+\frac{q^2+1}{2})\pmod{rn}$, which gives
$$C_{1+r\frac{q-1}{2}}=\Big\{1+r\frac{q-1}{2}\Big\}~~~\hbox{and}~~~C_{1+r(\frac{q^2+1}{2}+\frac{q-1}{2})}=\Big\{1+r(\frac{q-1}{2}+\frac{q^2+1}{2})\Big\}.$$
Clearly, $q^2(s-ri)\equiv s+ri\pmod{rn}$ for any integer $i$.  For $1\leq i\leq s-1$, $s-ri\not\equiv s+ri\pmod{rn}$. Thus
$C_{s-ri}=\{s-ri, s+ri\}$ for $1\leq i\leq s-1$.
It is easy to see that $C_s\neq C_{1+r(\frac{q-1}{2}+\frac{q^2+1}{2})}$.
We want to prove that  the $q^2$-cyclotomic cosets $C_{s-ri}$,  $1\leq i\leq s-1$, are distinct.
Suppose otherwise that two integers $i,j$ with $1\leq i\neq j\leq s-1$ can be found such that
$\{s-ri,s+ri\}=C_{s-ri}=C_{s-rj}=\{s-rj,s+rj\}$. It is obvious that $s-ri\not\equiv s-rj\pmod{rn}$, which forces
$s-ri\equiv s+rj\pmod{rn}$.
This leads to $n\mid(i+j)$, which is a contradiction.
Finally, it is easy to see that the size of the union of these
$q^2$-cyclotomic cosets is equal to $n$. This
completes the proof.
\end{proof}

\begin{lem}\label{contain1}
Let $q$ be an odd prime power and $\lambda\in \mathbb{F}_{q^2}$ be a primitive $(q+1)$th root of unity.
Let $s=\frac{q^2+1}{2}$.
 If $\mathcal{C}$ is a $\lambda$-constacyclic code of length $q^2+1$ over $\mathbb{F}_{q^2}$ with defining set
\begin{equation}\label{constr1}
Z=\bigcup\limits_{j=0}^\delta C_{s-rj}=\Big\{s-r\delta, s-r(\delta-1), \cdots, s-r, s, s+r, \cdots, s+r\delta \Big\},
~~~\hbox{$0\leq\delta\leq\frac{q-1}{2}$},
\end{equation}
then $\mathcal{C}$ is a $[q^2+1,q^2-2\delta, 2\delta+2]$ MDS code  satisfying  $\mathcal{C}^{\perp_{h}}\subseteq \mathcal{C}$.
\end{lem}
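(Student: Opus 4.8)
The plan is to verify the four assertions in turn: the dimension, a lower bound on the minimum distance from the BCH bound, the MDS property from the Singleton bound, and the inclusion $\mathcal{C}^{\perp_h}\subseteq\mathcal{C}$ from Lemma~\ref{dual-containong}. The first three are quick. By Lemma~\ref{lem1} the coset $C_s=\{s\}$ is a singleton, each $C_{s-rj}$ with $1\le j\le s-1$ has exactly two elements, and all these cosets are pairwise distinct; since $0\le\delta\le\frac{q-1}{2}\le s-1$, the defining set $Z=\bigcup_{j=0}^{\delta}C_{s-rj}$ is the disjoint union of $C_s$ and $\delta$ two-element cosets, so $|Z|=2\delta+1$ and $\dim_{\mathbb{F}_{q^2}}\mathcal{C}=n-|Z|=q^2-2\delta$. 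For the distance, note $s=1+r\frac{q-1}{2}$, so with $\zeta=\beta^{r}$ the roots of the generator polynomial of $\mathcal{C}$ include $\beta^{\,s+rt}=\beta\,\zeta^{\,\frac{q-1}{2}+t}$ for $-\delta\le t\le\delta$, that is, the $2\delta+1$ consecutive powers $\beta\zeta^{i}$ with $\frac{q-1}{2}-\delta\le i\le\frac{q-1}{2}+\delta$. Theorem~\ref{BCH} then gives $d(\mathcal{C})\ge 2\delta+2$, while the Singleton bound gives $d(\mathcal{C})\le n-\dim\mathcal{C}+1=2\delta+2$; hence $d(\mathcal{C})=2\delta+2$ and $\mathcal{C}$ is a $[q^2+1,\,q^2-2\delta,\,2\delta+2]$ MDS code.

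The substantive part is the dual-containing property. By Lemma~\ref{dual-containong} it is enough to show $Z\cap Z^{-q}=\emptyset$, where $Z^{-q}=\{-qz\bmod rn\mid z\in Z\}$. First I would record how multiplication by $-q$ acts on exponents: since $r=q+1$ we have $-q\equiv1\pmod r$, so $-q(1+ri)\bmod rn$ again lies in $\theta_{r,n}$ and equals $1+ri'$ for a unique $i'\pmod n$; reducing $-q(1+ri)\equiv1+ri'\pmod{rn}$ gives $r(qi+i')\equiv-r\pmod{rn}$, hence $i'\equiv-1-qi\pmod n$. Now write the exponents occurring in $Z$ as $i=\frac{q-1}{2}+a$ with $-\delta\le a\le\delta$. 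A direct substitution shows that $Z\cap Z^{-q}\ne\emptyset$ would produce $a,b\in\{-\delta,\dots,\delta\}$ with $\frac{q^2-1}{2}+1+a+qb\equiv0\pmod n$, i.e. $a+qb\equiv-s\equiv s\pmod n$ (using $n=2s$). But $|a+qb|\le(q+1)\delta\le\frac{(q+1)(q-1)}{2}=\frac{q^2-1}{2}<\frac{q^2+1}{2}=s$, so $a+qb$ lies strictly between $-s$ and $s$ and therefore cannot be congruent to $s$ modulo $n=2s$ — a contradiction. Hence $Z\cap Z^{-q}=\emptyset$ and $\mathcal{C}^{\perp_h}\subseteq\mathcal{C}$.

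The only delicate points are in the last paragraph: deriving the exponent rule $i'\equiv-1-qi\pmod n$ correctly, and checking that the estimate $|a+qb|<s$ stays strict across the whole admissible range $0\le\delta\le\frac{q-1}{2}$ (the worst case $\delta=\frac{q-1}{2}$, $a=b=\delta$, gives $a+qb=\frac{q^2-1}{2}=s-1$, still safely below $s$). Everything else is an immediate application of Lemma~\ref{lem1}, Theorem~\ref{BCH}, the Singleton bound, and Lemma~\ref{dual-containong}.
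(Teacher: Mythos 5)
Your proof is correct and follows essentially the same route as the paper's: BCH bound plus Singleton for the MDS parameters, and Lemma~\ref{dual-containong} with a size estimate $(q+1)\delta\le\frac{q^2-1}{2}<s$ for the dual-containing property. The only difference is cosmetic — by writing every element of $Z$ as $1+r\bigl(\frac{q-1}{2}+a\bigr)$ with $a\in\{-\delta,\dots,\delta\}$ you fold the paper's two cases ($-q(s-ri)\equiv s\mp rj$) into a single congruence $a+qb\equiv s\pmod{2s}$, which is a mild streamlining of the same argument.
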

\begin{proof}
By Lemma \ref{lem1}, one gets $|Z|=2(\delta+1)-1=2\delta+1$.
We then see that $d(C)=2\delta+2$ by the BCH bound for constacyclic codes (see Lemma \ref{BCH}) and the Singleton bound for linear codes.
It follows  that $\mathcal{C}$ is a $[q^2+1,q^2-2\delta, 2\delta+2]$ MDS code. We  need to show that
$\mathcal{C}^{\perp_{h}}\subseteq \mathcal{C}$.

By Lemma \ref{dual-containong}, it is enough to prove that $Z\bigcap Z^{-q}=\emptyset$. Suppose otherwise that
$Z\bigcap Z^{-q}\neq\emptyset$, i.e. two integers $i,j$ with $0\leq i,j\leq\delta$ can be found such that
$-qC_{s-ri}=C_{s-rj}$. Thus, $-qC_{s-ri}=\{-q(s-ri), -q(s+ri)\}=C_{s-rj}=\{s-rj,s+rj\}$. Two cases may occur at this point:

(i)~$-q(s-ri)\equiv s-rj\pmod{rn}$. After expanding and reducing this equation, we obtain $qi+j\equiv s\pmod{n}$.
Since $0\leq qi+j\leq q\delta+\delta\leq q\frac{q-1}{2}+\frac{q-1}{2}=\frac{q^2-1}{2}<n$ and $0<s<n$,
it follows that $qi+j=s$. However, $qi+j\leq\frac{q^2-1}{2}<s=\frac{q^2+1}{2}$. This is a contradiction.

(ii)~$-q(s-ri)\equiv s+rj\pmod{rn}$. Similarly, we obtain $qi\equiv s+j\pmod{n}$.
Clearly, $0\leq qi\leq\frac{q^2-q}{2}<n$ and $0<s+j<n$. Thus $qi=s+j$. However, $qi\leq\frac{q^2-q}{2}<s+j$. This is a contradiction.
\end{proof}

Using Lemma \ref{check}, Theorem \ref{juanji} and Lemma \ref{lem1},
we obtain the following  classical convolutional codes.
\begin{lem}\label{classical}
Let $n=q^2+1$, where $q\geq5$ is an odd prime power. Let $i$ be an integer with  $2\leq i\leq \frac{q-1}{2}$.   Then there exists  a classical convolutional code $V$ with parameters
$(n,2i-1, 2; 1, \geq n-2i)_{q^2}$;
the free distance of $V^{\perp_h}$ is exactly equal to $2i+2$. Furthermore, $V$ satisfies $V\subseteq V^{\perp_h}$.
\end{lem}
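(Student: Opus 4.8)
The plan is to derive $V$ from a constacyclic block code by means of Theorem~\ref{juanji}. Put $n=q^2+1$, $r=q+1$, $s=\frac{q^2+1}{2}$, and let $\mathcal{C}$ be the $\lambda$-constacyclic code furnished by Lemma~\ref{contain1} with $\delta=i$ (recall $2\le i\le\frac{q-1}{2}$, which forces $q\ge5$). Then $\mathcal{C}$ is a $[q^2+1,\,q^2-2i,\,2i+2]$ MDS code with $\mathcal{C}^{\perp_h}\subseteq\mathcal{C}$, and by Lemma~\ref{lem1} its defining set $Z=\{s-ri,\dots,s,\dots,s+ri\}$ is the disjoint union of the $i+1$ $q^2$-cyclotomic cosets $C_s=\{s\}$, $C_{s-r}=\{s-r,s+r\}$, $\dots$, $C_{s-ri}=\{s-ri,s+ri\}$; moreover $\beta^{s}\in\mathbb{F}_{q^2}$, since $q^2s\equiv s\pmod{rn}$ as observed in the proof of Lemma~\ref{lem1}. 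Because ${\rm ord}_{rn}(q^2)=2$, expanding the matrix provided by Lemma~\ref{check} over an $\mathbb{F}_{q^2}$-basis of $\mathbb{F}_{q^4}$ and deleting the single dependent row yields a parity check matrix $H$ of $\mathcal{C}$ over $\mathbb{F}_{q^2}$ with exactly $|Z|=2i+1$ rows, which break up according to the above cosets: one row attached to $C_s$, and for each $j$ with $1\le j\le i$ a pair of rows whose $\mathbb{F}_{q^2}$-row space equals that of the Vandermonde rows for $\beta^{s-rj}$ and $\beta^{s+rj}$.

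I would then split $H$ as in~(\ref{partition}) with $\mu=1$: let $H_0$ consist of the $1+2(i-1)=2i-1$ rows coming from $C_s,C_{s-r},\dots,C_{s-r(i-1)}$, and let $H_1$ consist of the two rows coming from $C_{s-ri}$. The hypotheses of Theorem~\ref{juanji} are met: $H_0$ has $2i-1$ rows, and they are independent because their $\mathbb{F}_{q^2}$-span is the $\mathbb{F}_{q^2}$-rational form of the Vandermonde matrix on the $2i-1$ exponents $s-r(i-1),\dots,s+r(i-1)$, so $\kappa:={\rm rk}\,H_0=2i-1$; and ${\rm rk}\,H_1=2\le\kappa$, which is exactly where $i\ge2$ enters. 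Taking $\tilde{H}_0=H_0$ and forming $\tilde{H}_1$ from $H_1$ by appending $2i-3$ zero rows, Theorem~\ref{juanji}(1) shows that $G(D)=\tilde{H}_0+\tilde{H}_1D$ is a reduced basic generator matrix. Only the two rows that meet $\tilde{H}_1$ carry a $D$-term, so $V$ has length $n=q^2+1$, dimension $\kappa=2i-1$, memory $\mu=1$, and degree $\gamma=2$. Theorem~\ref{juanji}(2), together with $\mathcal{C}^{\perp_h}\subseteq\mathcal{C}$, gives $V\subseteq V^{\perp_h}$; and Theorem~\ref{juanji}(3) gives $d_f(V)\ge d^{\perp_h}$, the minimum distance of $\mathcal{C}^{\perp_h}$. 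Since $\mathcal{C}$ is MDS, so is $\mathcal{C}^{\perp_h}$, which therefore has parameters $[q^2+1,\,2i+1,\,q^2+1-2i]$; hence $d^{\perp_h}=n-2i$ and $d_f(V)\ge n-2i$.

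It remains to compute $d_f(V^{\perp_h})$ exactly. By Theorem~\ref{juanji}(3), $\min\{d_0+d_\mu,\,d\}\le d_f^{\perp_h}\le d=2i+2$, where $d_0$ and $d_\mu=d_1$ are the minimum distances of $\mathcal{C}_0=\{v\in\mathbb{F}_{q^2}^n\,|\,v\tilde{H}_0^t=0\}$ and $\mathcal{C}_1=\{v\in\mathbb{F}_{q^2}^n\,|\,v\tilde{H}_1^t=0\}$. By the choice of the partition, $\mathcal{C}_0$ is the $\lambda$-constacyclic code with defining set $C_s\cup C_{s-r}\cup\cdots\cup C_{s-r(i-1)}=\{s-r(i-1),\dots,s,\dots,s+r(i-1)\}$, which consists of the $2i-1$ consecutive roots $\beta\zeta^{j}$ for $\frac{q-1}{2}-(i-1)\le j\le\frac{q-1}{2}+(i-1)$, where $\zeta=\beta^{r}$; hence $d_0\ge2i$ by the BCH bound (Lemma~\ref{BCH}). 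Likewise $\mathcal{C}_1$ is the $\lambda$-constacyclic code with the nonempty defining set $C_{s-ri}$, so $d_1\ge2$. Therefore $d_0+d_\mu\ge2i+2=d$, which forces $d_f^{\perp_h}=2i+2$. The step I expect to require the most care is the bookkeeping behind this partition: the pieces $H_0,H_1$ must simultaneously reassemble into a parity check matrix of $\mathcal{C}$, meet the rank conditions ${\rm rk}\,H_0=\kappa\ge{\rm rk}\,H_1$ demanded by Theorem~\ref{juanji}, and leave $\tilde{H}_0$ and $\tilde{H}_1$ as parity check matrices of constacyclic codes whose defining sets are unions of full $q^2$-cyclotomic cosets, so that Lemma~\ref{BCH} applies to each of $\mathcal{C}_0$ and $\mathcal{C}_1$. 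Splitting the rows of $H$ along the cosets $C_s,\dots,C_{s-ri}$, as above, is precisely what reconciles these three requirements.
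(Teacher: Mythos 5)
Your proposal is correct and follows essentially the same route as the paper: both derive $V$ from the constacyclic code with defining set $\bigcup_{b=0}^{i}C_{s-rb}$, split its parity check matrix along the cosets into the pieces for $\bigcup_{b=0}^{i-1}C_{s-rb}$ and for $C_{s-ri}$, and then apply Theorem~\ref{juanji} together with the MDS/BCH estimates $d_0\ge 2i$, $d_1\ge 2$ to pin down $d_f^{\perp_h}=2i+2$. Your explicit verification of the rank conditions $\kappa=\operatorname{rk}H_0=2i-1\ge\operatorname{rk}H_1=2$ is a welcome detail the paper leaves implicit.
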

\begin{proof}
Let $r=q+1$ and $\lambda\in\mathbb{ F}_{q^2}$ be a primitive $r$th root of unity.
Assume that $\beta$ is a primitive $rn$th root of unity  in some extension field of $\mathbb{F}_{q^2}$ such that $\beta^n=\lambda$.
Since ${\rm ord}_{rn}(q^2)=2$, it follows that $\beta\in\mathbb{ F}_{q^4}$.
Let $s=\frac{q^2+1}{2}$ and $i$ be an integer with $2\leq i\leq (q-1)/2$. Let $\mathcal{C}$ be a $\lambda$-constacyclic code of length $n$ over $\mathbb{F}_{q^2}$ with defining set $Z=\bigcup_{b=0}^{i} C_{s-rb}$.
It follows from Lemma \ref{check} that a parity check matrix  of $\mathcal{C}$, denoted by $N_\mathcal{C}$,  can be obtained from the following matrix
$$
H_{\mathcal{C}}=\left(
  \begin{array}{ccccc}
    1      &       \beta^{s}  & \beta^{2s} & \cdots & \beta^{(n-1)s} \\
    1      &       \beta^{s-r}  & \beta^{2(s-r)} & \cdots & \beta^{(n-1)(s-r)} \\
    \vdots &       \vdots                            & \vdots & \vdots & \vdots \\
    1      &       \beta^{s-r(i-1)}   & \beta^{2(s-r(i-1)))} & \cdots & \beta^{(n-1)(s-r(i-1))} \\
     1      &       \beta^{s-ri}   & \beta^{2(s-ri)} & \cdots & \beta^{(n-1)(s-ri)} \\
  \end{array}
\right)
$$
by expanding each entry as a column vector (containing $2$ rows) with respect to certain $\mathbb{F}_{q^2}$-basis of
$\mathbb{F}_{q^4}$ and then removing any linearly dependent rows.
Therefore, $N_\mathcal{C}$ has rank $2i+1$, implying that $\mathcal{C}$ is an MDS code with parameters  $[n, n-2i-1, 2i+2]$.
Consequently, $\mathcal{C}^{\perp_h}$ is also an MDS code with parameters  $[n, 2i+1, n-2i]$.

Now let $\mathcal{C}_0$ be a $\lambda$-constacyclic code of length $n$ over $\mathbb{F}_{q^2}$ with defining set $Z_0=\bigcup_{b=0}^{i-1} C_{s-rb}$. Similar reasoning shows that $\mathcal{C}_0$ is an MDS code with  parameters  $[n, n-2i+1, 2i]$, and that $\mathcal{C}^{\perp_h}_0$ is an
MDS code with  parameters  $[n, 2i-1, n-2i+2]$.
Further, a parity check matrix  of $\mathcal{C}_0$, denoted by $N_{\mathcal{C}_0}$,  can be obtained from the following matrix
$$
H_{\mathcal{C}_0}=\left(
  \begin{array}{ccccc}
    1      &       \beta^{s}  & \beta^{2s} & \cdots & \beta^{(n-1)s} \\
    1      &       \beta^{s-r}  & \beta^{2(s-r)} & \cdots & \beta^{(n-1)(s-r)} \\
    \vdots &       \vdots                            & \vdots & \vdots & \vdots \\
    1      &       \beta^{s-r(i-1)}   & \beta^{2(s-r(i-1)))} & \cdots & \beta^{(n-1)(s-r(i-1))} \\
  \end{array}
\right)
$$
by expanding each entry as a column vector (containing $2$ rows) with respect to the  $\mathbb{F}_{q^2}$-basis of
$\mathbb{F}_{q^4}$ and then removing any linearly dependent rows (This has been done, since $H_{\mathcal{C}_0}$ is a submatrix of $H_{\mathcal{C}}$).
In particular, $N_{\mathcal{C}_0}$ has rank $2i-1$.

Next  let $\mathcal{C}_1$ be a $\lambda$-constacyclic code of length $n$ over $\mathbb{F}_{q^2}$ with defining set $Z_0=C_{s-ri}$.
Thus $\mathcal{C}_1$ has parameters $[n,n-2, \geq2]$. A parity check matrix, denoted by $N_{\mathcal{C}_1}$, is given by expanding the entries of the matrix
$$
H_{\mathcal{C}_1}=\Big[1, \beta^{s-ri}, \beta^{2(s-ri)}, \cdots, \beta^{(n-1)(s-ri)}\Big]
$$
with respect to $\beta$ (This has been done, since $H_{\mathcal{C}_1}$ is a submatrix of $H_{\mathcal{C}}$).
According to Theorem \ref{juanji} (1), a convolutional code $V$ is obtained which is generated by the reduced basic
generator matrix
$$
G(D)=\tilde{N}_{\mathcal{C}_0}+\tilde{N}_{\mathcal{C}_1}D
$$
where $\tilde{N}_{\mathcal{C}_0}=N_{\mathcal{C}_0}$ and $\tilde{N}_{\mathcal{C}_1}$ is derived from $N_{\mathcal{C}_1}$
by adding zero-rows at the bottom such that the rows of $\tilde{N}_{\mathcal{C}_1}$ is exactly equal to  the number of rows of $N_{\mathcal{C}_0}$.
It follows from Theorem \ref{juanji} that $V$ is a  convolutional code of dimension $2i-1$,  degree $2$, memory $1$ and free distance $\geq n-2i$.
For the free distance of $V^{\perp_h}$, we have that $\min{\{\geq2i+2,2i+2\}}\leq d_f^{\perp_h}\leq2i+2$ which forces $d_f^{\perp_h}=2i+2$.

Finally, it follows from Lemma \ref{contain1} that $\mathcal{C}^{\perp_h}\subseteq \mathcal{C}$, which gives
$V^{\perp_h}\subseteq V$ by Theorem \ref{juanji} (2). This completes the proof.
\end{proof}

We are now in a position to  show the main result of this
subsection.
\begin{Theorem}\label{main1}
Let $n=q^2+1$, where $q\geq5$ is an odd prime power.
Let $i$ be an integer with  $2\leq i\leq(q-1)/2$.
Then there exist MDS quantum convolutional codes with parameters
$[(n,n-4i+2,1;2,2i+2)]_q$.
\end{Theorem}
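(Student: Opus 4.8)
The plan is to assemble the result directly from the pieces already in place: feed the classical convolutional code of Lemma \ref{classical} into the CSS-type construction of Lemma \ref{c}, and then show that the quantum code so obtained is squeezed against the quantum Singleton bound of Lemma \ref{bound}, so that it must be MDS.

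Concretely, for $2\le i\le (q-1)/2$, Lemma \ref{classical} supplies a classical convolutional code $V$ with parameters $(n,2i-1,2;1,\ge n-2i)_{q^2}$ such that $V\subseteq V^{\perp_h}$ and the free distance of $V^{\perp_h}$ is exactly $2i+2$. Writing $(n-k)/2=2i-1$, i.e. $k=n-4i+2$, with $\gamma=2$ and $\mu=1$, Lemma \ref{c} produces an $[(n,n-4i+2,1;2,d_f)]_q$ convolutional stabilizer code whose free distance is $d_f=\mathrm{wt}(V^{\perp_h}\setminus V)$. Since $V^{\perp_h}\setminus V$ consists of nonzero elements of $V^{\perp_h}$, we get the lower bound $d_f\ge \mathrm{wt}(V^{\perp_h})=2i+2$. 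For the matching upper bound I would apply Lemma \ref{bound}: here $n-k=4i-2$ and $n+k=2n-4i+2$, and because $q\ge5$ forces $2n-4i+2>4=2\gamma$, the floor term vanishes, $\lfloor 2\gamma/(n+k)\rfloor=0$. Hence
$$d_f\le\frac{n-k}{2}\Big(\Big\lfloor\tfrac{2\gamma}{n+k}\Big\rfloor+1\Big)+\gamma+1=(2i-1)+2+1=2i+2.$$
Combining the two inequalities gives $d_f=2i+2$, the code attains the quantum Singleton bound, and therefore it is an MDS quantum convolutional code with parameters $[(n,n-4i+2,1;2,2i+2)]_q$, as claimed.

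I do not expect a genuine obstacle here: the substantive content — establishing the dual-containing property $\mathcal{C}^{\perp_h}\subseteq\mathcal{C}$ through the cyclotomic-coset decomposition of Lemma \ref{lem1} and the congruence analysis of Lemma \ref{contain1}, and pinning down $d_f^{\perp_h}=2i+2$ via Theorem \ref{juanji}(3) — has already been done in Lemma \ref{contain1} and Lemma \ref{classical}. The only things requiring care are the bookkeeping of parameters through Lemma \ref{c} (that $(n-k)/2=2i-1$, $\gamma=2$, $\mu=1$ transfer correctly) and the observation that the floor term in the Singleton bound is $0$, which is precisely where the hypothesis $q\ge5$ (making $n$ large relative to $i$) is used. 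One minor point worth noting is that Lemma \ref{bound} is phrased for pure codes; this causes no trouble since $d_f=2i+2=\mathrm{wt}(V^{\perp_h})$ shows the constructed code is in fact pure, so the bound applies and is met with equality.
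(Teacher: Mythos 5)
Your overall route is the paper's route: Lemma \ref{classical} feeds Lemma \ref{c}, the parameter bookkeeping $(n-k)/2=2i-1$, $k=n-4i+2$, $\gamma=2$, $\mu=1$ is exactly right, and the Singleton-bound computation with vanishing floor term is correct. The one genuine problem is how you pin down $d_f$. You obtain the upper bound $d_f\le 2i+2$ by invoking Lemma \ref{bound}, which is stated only for \emph{pure} codes, and you then justify purity at the end by observing that $d_f=2i+2=\mathrm{wt}(V^{\perp_h})$. That is circular: the equality $d_f=2i+2$ was itself derived from the Singleton bound, whose applicability is what purity is supposed to guarantee. As written, the sandwich argument does not close.

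The repair is immediate and is precisely what the paper does instead: Lemma \ref{classical} also gives $\mathrm{wt}(V)\ge n-2i$, and for $q\ge5$ and $i\le(q-1)/2$ one has $n-2i\ge q^2-q+2>q+1\ge 2i+2$. Hence no nonzero element of $V$ has weight as small as $2i+2$, so every minimum-weight codeword of $V^{\perp_h}$ lies in $V^{\perp_h}\setminus V$, and therefore $d_f=\mathrm{wt}(V^{\perp_h}\setminus V)=\mathrm{wt}(V^{\perp_h})=2i+2$ directly, with purity established at the same stroke and without any appeal to Lemma \ref{bound}. The Singleton bound is then used only at the end, to verify that $2i+2$ attains it and the code is MDS. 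With that reordering your argument coincides with the paper's proof.
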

\begin{proof}
By Lemma \ref{classical}, we have constructed a  convolutional code $V$ with parameters $(n,2i-1, 2;1, \geq n-2i)_{q^2}$; furthermore, $V$ satisfies
$V\subseteq V^{\perp_h}$. Now $n=q^2+1$, $\gamma=2$ and $\mu=1$. Let $k$ be an integer satisfying $\frac{n-k}{2}=2i-1$. Thus $k=n-4i+2$.
Note that ${\rm wt}(V^{\perp_h})=2i+2$ and ${\rm wt}(V)\geq n-2i$.
  It is clear that $n-2i>2i+2$, which shows $d_f={\rm wt}(V^{\perp_h}\setminus V)=2i+2$.  Using Lemma \ref{c}, there exists
an $[(n,n-4i+2,1;2,2i+2)]_q$ convolutional stabilizer code.
Finally, we show that the resulting convolutional stabilizer code attains the Quantum Singleton bound (see Lemma \ref{bound}):
$$
\frac{n-k}{2}\Big(\Big\lfloor\frac{2\gamma}{n+k}\Big\rfloor+1\Big)+\gamma+1=(2i-1)\cdot(0+1)+2+1=2i+2=d_f.
$$
\end{proof}

\begin{Example}
In Table $1$, we list some MDS quantum convolutional codes   obtained from
Theorem~\ref{main1} for $q=7, 11, 13, 19$ and $23$.

\begin{center}
\begin{longtable}{c|c|c}  %table 1
\caption{MDS Quantum  Convolutional Codes}\\\hline
$q$    &    $[(q^2+1, q^2-4i+3,1;2, 2i+2)]_q$  &   $2\leq i\leq\frac{q-1}{2}$   \\\hline
$7$    &   $[(50, 52-4i,1;2, 2i+2)]_7$   & $2\leq i\leq 3$ \\
$11$    &   $[(122, 124-4i,1;2, 2i+2)]_{11}$   & $2\leq i\leq 5$ \\
$13$    &   $[(170, 172-4i,1;2, 2i+2)]_{13}$   & $2\leq i\leq 6$ \\
$19$      &   $[(362, 364-4i,1;2, 2i+2)]_{19}$   & $2\leq i\leq 9$ \\
$23$     &   $[(530, 532-4i,1;2, 2i+2)]_{23}$   & $2\leq i\leq 11$ \\
\hline
\end{longtable}
 \end{center}

\end{Example}

\subsection{MDS quantum convolutional codes  of length $\frac{q^2+1}{10}$}

Let $q$ be an odd prime power such that $10\mid(q^2+1)$, i.e., $q$ has  the form $10m+3$ or $10m+7$, where $m$ is a positive integer.
Let $n=\frac{q^2+1}{10}$, $s=\frac{q^2+1}{2}$ and $r=q+1$. It is clear that $s\equiv1\pmod{r}$, which implies that
$s\pmod{rn}\in\theta_{r,n}=\{1+ri\,|\,0\leq i\leq n-1\}$.
As  in the previous subsection, we need the following lemmas.

\begin{lem}\label{cyclotomic}
Assume that $q$ is an odd prime power with $10\mid(q^2+1)$.
Let $n=\frac{q^2+1}{10}$, $s=\frac{q^2+1}{2}$ and $r=q+1$. Then $\theta_{r,n}=\{1+ri\,|\,0\leq i\leq n-1\}$ is
a disjoint union of $q^2$-cyclotomic cosets modulo $rn$:
$$
\theta_{r,n}=C_{s}\bigcup\Big(\bigcup\limits_{k=0}^{\frac{n-1}{2}-1}C_{s-(q+1)(\frac{n-1}{2}-k)}\Big).
$$
\end{lem}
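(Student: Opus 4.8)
The plan is to imitate the proof of Lemma~\ref{lem1}: parametrize $\theta_{r,n}$ by indices via $1+ri\leftrightarrow i$ and determine how multiplication by $q^2$ permutes these indices. Since $rn=\frac{(q+1)(q^2+1)}{10}$ divides $q^4-1=(q-1)(q+1)(q^2+1)$ (the quotient being $10(q-1)$), we have ${\rm ord}_{rn}(q^2)\mid 2$, so each $q^2$-cyclotomic coset modulo $rn$ has at most two elements. Using the integer identity $q^2=1+r(q-1)$ (valid because $r=q+1$), one gets $q^2(1+ri)=1+r\bigl(i+(q-1)+ri(q-1)\bigr)$, and since $ri(q-1)=i(q^2-1)=10in-2i\equiv-2i\pmod n$, this reduces to
$$
q^2(1+ri)\equiv 1+r(q-1-i)\pmod{rn}
$$
for every integer $i$. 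Hence multiplication by $q^2$ acts on the index $i$ (taken modulo $n$) as the involution $i\mapsto q-1-i$.

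Next I would read off the cosets. One checks that $n=\frac{q^2+1}{10}$ is odd (it divides the odd integer $\frac{q^2+1}{2}$) and, for $q\ge 7$, that $\frac{q-1}{2}<n$ (since $(q-2)(q-3)>0$); the case $q=3$ gives $n=1$ and the claim reduces to $C_s=\theta_{r,n}$, which is immediate. Thus $s=\frac{q^2+1}{2}=1+r\frac{q-1}{2}$ lies in $\theta_{r,n}$ at index $\frac{q-1}{2}$, and since $q-1-\frac{q-1}{2}=\frac{q-1}{2}$ this index is fixed by the involution, so $C_s=\{s\}$. For $1\le j\le\frac{n-1}{2}$, the element $s-rj$ has index $\frac{q-1}{2}-j$, which the involution sends to $\frac{q-1}{2}+j$, the index of $s+rj$; moreover $s-rj\not\equiv s+rj\pmod{rn}$, since otherwise $n\mid 2j$, impossible for $1\le j\le\frac{n-1}{2}<n$ with $n$ odd. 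Hence $C_{s-rj}=\{s-rj,\ s+rj\}$ has exactly two elements.

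Finally, distinctness and counting. If $C_{s-rj}=C_{s-rj'}$ for some $1\le j<j'\le\frac{n-1}{2}$, then the indices $\frac{q-1}{2}-j$ and $\frac{q-1}{2}-j'$ are distinct modulo $n$, which forces $\frac{q-1}{2}-j\equiv\frac{q-1}{2}+j'\pmod n$, i.e.\ $n\mid(j+j')$; but $0<j+j'\le n-2$, a contradiction. So $C_s$ and the cosets $C_{s-rj}$ ($1\le j\le\frac{n-1}{2}$) are pairwise distinct, hence pairwise disjoint, with total cardinality $1+2\cdot\frac{n-1}{2}=n=|\theta_{r,n}|$; being subsets of $\theta_{r,n}$ they therefore partition it. Re-indexing by $j=\frac{n-1}{2}-k$ turns $\bigcup_{j=1}^{(n-1)/2}C_{s-rj}$ into the union displayed in the statement. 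I expect no genuine obstacle here — the only points requiring care are keeping the moduli $n$ and $rn$ straight in the expansion of $q^2(1+ri)$ and checking that $n$ is odd with $\frac{q-1}{2}<n$ so that $s$ really sits in $\theta_{r,n}$ at index $\frac{q-1}{2}$; everything else is the same counting argument as in Lemma~\ref{lem1}.
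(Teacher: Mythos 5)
Your proof is correct and follows essentially the same route as the paper's: compute $q^2(1+ri)\equiv 1+r(q-1-i)\pmod{rn}$, identify $s$ as a fixed singleton coset, show the two-element cosets are pairwise distinct by the same $n\mid(j+j')$ contradiction, and conclude by counting. The only cosmetic differences are that the paper proves directly that $s$ is the unique fixed element whereas you let the cardinality count do that work, and you add the (harmless, correct) checks that $n$ is odd and that $\frac{q-1}{2}<n$.
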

\begin{proof}
Observe that $q^4\equiv1~(\bmod~rn)$, which implies that each $q^2$-cyclotomic coset modulo $rn$ contains one or two elements.
Now,
$$
q^2\big(1+(q+1)j\big)=q^2+q^2(q+1)j=q^2+(q^2+1-1)(q+1)j\equiv q^2-(q+1)j~(\bmod~rn).
$$
It is clear that for $0\leq j\leq n-1$, $1+(q+1)j\equiv q^2-(q+1)j~(\bmod~rn)$ if and only if $j=\frac{q-1}{2}+dn$ ($d$ is an integer),
which forces $d=0$ and hence $j=\frac{q-1}{2}$. This shows that $s=1+(q+1)\frac{q-1}{2}=\frac{q^2+1}{2}$ is the unique element of $\theta_{r,n}$ with
$q^2s\equiv s~(\bmod~rn)$.
To complete the proof, it suffices to show that for any $0\leq i\neq j\leq\frac{n-1}{2}-1$,
$C_{s-(q+1)(\frac{n-1}{2}-i)}=\{s-(q+1)(\frac{n-1}{2}-i),s+(q+1)(\frac{n-1}{2}-i)\}$
and $C_{s-(q+1)(\frac{n-1}{2}-j)}$ are distinct.
Suppose otherwise that $C_{s-(q+1)(\frac{n-1}{2}-i)}=C_{s-(q+1)(\frac{n-1}{2}-j)}$ for some $0\leq i\neq j\leq\frac{n-1}{2}-1$.
If $s-(q+1)(\frac{n-1}{2}-i)\equiv s-(q+1)(\frac{n-1}{2}-j)~(\bmod~rn)$,  then $i\equiv j~(\bmod~n)$ which is impossible;
If  $s-(q+1)(\frac{n-1}{2}-i)\equiv s+(q+1)(\frac{n-1}{2}-j)~(\bmod~rn)$, then $i+j\equiv -1~(\bmod~n)$ which is a contradiction.
\end{proof}

Let $\lambda\in\mathbb{ F}_{q^2}$ be a primitive $r$th root of unity, and let $\beta\in \mathbb{F}_{q^4}$ be a primitive $rn$th
root of unity such that $\beta^n=\lambda$.
Let $\mathcal{C}$ be a $\lambda$-constacyclic code of length $n=\frac{q^2+1}{10}$ over $\mathbb{F}_{q^2}$
with defining set
\begin{equation}\label{set}
Z=\bigcup\limits_{j=0}^{2m-1}C_{s-(q+1)(\frac{n-1}{2}-j)}.
\end{equation}
We then know from Lemma~\ref{cyclotomic} that $Z$ is a disjoint union of $q^2$-cyclotomic cosets modulo $rn$ with $|Z|=4m$.
Moreover, we assert that the minimum distance of $\mathcal{C}$ is exactly equal to $4m+1$. To see this, observe that
$$
Z=\big\{s+r(\frac{n-1}{2}-2m+1), s+r(\frac{n-1}{2}-2m+2),\cdots, s+r(\frac{n-1}{2}-1),$$ $$
s+r\frac{n-1}{2}, s-r\frac{n-1}{2}, s-r(\frac{n-1}{2}-1),\cdots, s-r(\frac{n-1}{2}-2m+1)\big\}.
$$
A simple calculation shows that $s+r\frac{n-1}{2}+r\equiv s-r\frac{n-1}{2}\pmod{rn}$.
By the BCH bound for constacyclic  codes, $\mathcal{C}$ is an MDS code with parameters $[n, n-4m, 4m+1]$.

The next result shows  that $\mathcal{C}$ is a dual-containing code.
\begin{lem}
Assume that $q$ is an odd prime power with   the form $10m+3$ or $10m+7$, where $m$ is a positive integer.
Let $n=\frac{q^2+1}{10}$, $s=\frac{q^2+1}{2}$ and $r=q+1$. Let $\mathcal{C}$ be a $\lambda$-constacyclic code of length $n=\frac{q^2+1}{10}$ over $\mathbb{F}_{q^2}$
with defining set as in (\ref{set}), where $\lambda\in \mathbb{F}_{q^2}$ is a primitive $r$th root of unity.
Then $\mathcal{C}$ is a dual-containing code.
\end{lem}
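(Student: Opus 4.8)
The plan is to invoke Lemma~\ref{dual-containong}, which reduces the claim to showing that $Z\cap Z^{-q}=\emptyset$ in $\mathbb{Z}/rn\mathbb{Z}$, where $r=q+1$ and $Z^{-q}=\{-qz\bmod rn\,:\,z\in Z\}$. Writing $Z$ through its cyclotomic cosets, the defining set $(\ref{set})$ together with Lemma~\ref{cyclotomic} gives $Z=\bigcup_{\ell=a}^{b}C_{s-r\ell}$ with $C_{s-r\ell}=\{s-r\ell,\,s+r\ell\}$, where $b=\frac{n-1}{2}$ and $a=\frac{n-1}{2}-2m+1$, so that $b-a=2m-1$. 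The first thing I would record is the identity $-qs\equiv s\pmod{rn}$: since $s=\frac{q^2+1}{2}=5n$ and $rn=(q+1)n$, one has $-qs=-5qn=-5rn+5n\equiv 5n=s$. It follows that $-q\,C_{s-r\ell_1}=\{s+qr\ell_1,\,s-qr\ell_1\}$ as a subset of $\mathbb{Z}/rn\mathbb{Z}$, and this set equals $C_{s-r\ell_2}$ exactly when $q\ell_1+\ell_2\equiv0\pmod n$ or $q\ell_1-\ell_2\equiv0\pmod n$ (the two matchings of elements; here $\gcd(r,n)=1$, so one may cancel the factor $r$). Hence it suffices to prove that $n\nmid q\ell_1+\ell_2$ and $n\nmid q\ell_1-\ell_2$ for all integers $\ell_1,\ell_2$ with $a\le\ell_1,\ell_2\le b$.

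Fix such $\ell_1,\ell_2$ and treat the two congruences in turn. For $q\ell_1+\ell_2$ we have $q\ell_1+\ell_2\in[qa+a,\,qb+b]=[ra,\,rb]$, and an elementary estimate using the explicit values of $a,b,n$ shows $(\lambda-1)n<ra$ and $rb<(\lambda+1)n$, where $\lambda=5m+1$ if $q=10m+3$ and $\lambda=5m+3$ if $q=10m+7$. Thus $q\ell_1+\ell_2\in\bigl((\lambda-1)n,(\lambda+1)n\bigr)$, so if $n\mid q\ell_1+\ell_2$ then necessarily $q\ell_1+\ell_2=\lambda n$. But then $\ell_1=(\lambda n-\ell_2)/q$ with $\ell_2$ ranging over $[a,b]$, so $\ell_1$ is confined to the interval $\bigl[(\lambda n-b)/q,\,(\lambda n-a)/q\bigr]$, of length $\frac{b-a}{q}=\frac{2m-1}{q}<1$; evaluating the two endpoints shows they share the same integer part and have nonzero fractional parts, so this interval contains no integer, contradicting $\ell_1\in\mathbb{Z}$. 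The case of $q\ell_1-\ell_2$ is handled the same way: $q\ell_1-\ell_2\in[qa-b,\,qb-a]\subset\bigl((\lambda'-1)n,(\lambda'+1)n\bigr)$ with $\lambda'=5m$ (resp.\ $5m+2$), forcing $q\ell_1-\ell_2=\lambda'n$, and then $\ell_1=(\lambda'n+\ell_2)/q$ again falls strictly between two consecutive integers. This completes the proof.

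The only real work is the bookkeeping: a few explicit polynomial inequalities in $m$ to trap $t=(q\ell_1\pm\ell_2)/n$ at a single admissible value, plus the two endpoint evaluations showing the resulting $\ell_1$-interval is integer-free, all carried out separately for $q=10m+3$ and $q=10m+7$ (the constants differ). I do not expect any conceptual obstacle here — the argument has exactly the shape of the proof of Lemma~\ref{contain1}, the only new feature being that the relevant ranges are large rather than sub-$n$, so one pins $t$ down to one value instead of excluding it outright. One should just double-check that the small cases still pass (for instance, for $q=17$, $m=1$ the interval $[ra,rb]$ in fact contains no multiple of $n$ at all, so that instance is even more immediate).
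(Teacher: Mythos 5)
Your proposal is correct; I checked the arithmetic in all four subcases (the two congruences for each of $q=10m+3$ and $q=10m+7$) and the claimed facts hold: for $q=10m+3$ one gets $a=5m^2+m+1$, $b=5m^2+3m$, and e.g.\ $(\lambda n-b)/q=5m^2+2m+\tfrac{2m+1}{q}$, $(\lambda n-a)/q=5m^2+2m+\tfrac{4m}{q}$, so the $\ell_1$-interval indeed sits strictly inside $(5m^2+2m,\,5m^2+2m+1)$, and similarly in the other three subcases. The skeleton is the same as the paper's: both invoke Lemma~\ref{dual-containong} and reduce to the impossibility of the two congruences $q\ell_1+\ell_2\equiv0$ and $q\ell_1-\ell_2\equiv0\pmod n$ (after your substitution $\ell=\frac{n-1}{2}-j$ these are exactly the paper's $qj+k\equiv-\frac{q+1}{2}$ and $qj-k\equiv-\frac{q-1}{2}\pmod n$). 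Where you genuinely diverge is in how those congruences are killed: the paper writes $j=j_1m+j_0$, $k=k_1m+k_0$ with $j_1,k_1\in\{0,1\}$ and runs through roughly eight sign-analysis subcases; you instead trap $q\ell_1\pm\ell_2$ in an interval of length less than $2n$ straddling a single multiple $\lambda n$ (resp.\ $\lambda' n$) of $n$, and then observe that the forced value of $\ell_1$ lies in an integer-free interval of length $\frac{2m-1}{q}<1$. Your route is cleaner and more mechanical --- it replaces ad hoc case distinctions by two interval containments and four Euclidean divisions --- at the cost of having to compute the explicit constants $\lambda,\lambda'$ separately for the two residue classes of $q$; your remark that for $q=17$ the interval $[ra,rb]$ happens to miss $\lambda n$ entirely is correct and harmless, since the argument only needs $\lambda n$ to be the \emph{only} candidate multiple. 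The only wording I would tighten is ``this set equals $C_{s-r\ell_2}$ exactly when'': what you need is that the two cosets \emph{intersect}, which for cyclotomic cosets is the same as equality, so the conclusion stands.
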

\begin{proof}
We have to prove that $Z\bigcap Z^{-q}=\emptyset$. We just give a proof for the case $q=10m+3$. The  case for $q=10m+7$ is proved  similarly.
Suppose there exist integers $j,k$ with $0\leq j,k\leq 2m-1$ such that $C_{-q(s-(q+1)(\frac{n-1}{2}-j))}=C_{s-(q+1)(\frac{n-1}{2}-k)}$.
Write $j=j_1m+j_0$ and $k=k_1m+k_0$, where $j_1,k_1\in\{0,1\}$ and $0\leq j_0,k_0<m$. Let  $j_0'=m-j_0$ and $k_0'=m-k_0$, and so $0<j_0', k_0'\leq m$.

{\it Case I.}~~$-q(s-(q+1)(\frac{n-1}{2}-j))\equiv s-(q+1)(\frac{n-1}{2}-k)~(\bmod~rn)$.
After routine computations, we obtain
\begin{equation}\label{one}
-\frac{q+1}{2}\equiv qj+k~(\bmod~n).
\end{equation}
Now $qj+k=(10m+3)(j_1m+j_0)+k_1m+k_0=10j_1m^2+(10j_0+3j_1+k_1)m+3j_0+k_0=10j_1m^2+(10m-10j_0'+3j_1+k_1)m+3m-3j_0'+m-k_0'$.

 Assume
$qj+k<n$.

If $j_1=0$, it follows from (\ref{one}) that
$$
(10m-10j_0'+k_1)m+3m-3j_0'+m-k_0'=n-\frac{q+1}{2}=10m^2+m-1.
$$
This leads to
$$
(k_1-10j_0'+4)m=m+3j_0'+k_0'-1,
$$
which is a contradiction,  since $(k_1-10j_0'+4)m<0$ and $m+3j_0'+k_0'-1>0$.

If $j_1=1$, then
$$
10m^2+(10m-10j_0'+3+k_1)m+3m-3j_0'+m-k_0'=10m^2+m-1,
$$
or equivalently,
$
10j_0'm+k_0'=1+(k_1+10m+6)m-3j_0'.
$
Now, $10j_0'm+k_0'\leq10m^2+m$,  but $1+(k_1+10m+6)m-3j_0'>10m^2+3m$, which is a contradiction.

Assume
$qj+k>n$.

If $j_1=0$, then $qj+k=(10m+3)j_0+k<(10m+3)m+2m=10m^2+5m<n$; this is impossible.

If $j_1=1$, we claim that $qj+k-n=(k_1+10j_0-3)m+k_0+3j_0-1<n$; this is because
$(k_1+10j_0-3)m+k_0+3j_0-1\leq(1+10m-10-3)m+m-1+3(m-1)-1<n$. From (\ref{one}) again,
we have $(k_1+10j_0-3)m+k_0+3j_0-1=n-\frac{q+1}{2}=10m^2+m-1$, or equivalently,
$(k_1-10j_0')m=k_0'+3j_0'$. This is a contradiction, because $k_0'>0, j_0'>0$
and $k_1-10j_0'<0$.

{\it Case II.}~~$-q(s-(q+1)(\frac{n-1}{2}-j))\equiv s+(q+1)(\frac{n-1}{2}-k)~(\bmod~rn)$.
After routine computations, we get
\begin{equation}\label{two}
-\frac{q-1}{2}\equiv qj-k~(\bmod~n).
\end{equation}
As we did previously, $qj-k=(10m+3)(j_1m+j_0)-k_1m-k_0=10j_1m^2+(10j_0+3j_1-k_1)m+3j_0-k_0$.

If $j_1=0$, then $qj-k\leq(10m+3)(m-1)<n$. When $0<qj-k<n$,  by (\ref{two}), $10j_0m+3j_0-k_1m-k_0=10m^2+m$,
which is equivalent to $10j_0'm-m+3j_0'+k_1m-k_0'=0$. This is impossible, since
$10j_0'm-m+3j_0'+k_1m-k_0'>10m-m-m>0.$
When $qj-k<0$ (Clearly, $0<k-qj<n$), we obtain $5m+1=\frac{q-1}{2}=k-qj$, which is a contradiction since $k<2m$.

If $j_1=1$ and $j_0=0$, we have  $qj-k=(10m+3)m-k<n$.
Using (\ref{two}), we get $k=2m$, also a contradiction.

If $j_1=1$ and $j_0>0$, we then know that $qj-k=10m^2+(10j_0+3-k_1)m+3j_0-k_0>n$.
On the other hand, $qj-k-n=10m^2-10j_0'm-3j_0'-k_1m-m+k_0'-1<n.$
Applying (\ref{two}) again, we obtain $-10j_0'm-3j_0'-k_1m-2m+k_0'-1=0$.
This is impossible, because $-10j_0'm-3j_0'-k_1m-2m+k_0'-1<0$.
\end{proof}

 The proof of next lemma is quite similar to that of Lemma \ref{classical}, so we omit its proof.
\begin{lem}\label{classical2}
Assume that $q$ is an odd prime power with   the form $10m+3$ or $10m+7$, where $m\geq2$ is a positive integer.
Let $i$ be an integer with  $2\leq i\leq 2m-1$ (This requires $m\geq2$).   Then there exists  a classical convolutional code $V$ with parameters
$(n,2i, 2; 1, \geq n-2i-1)_{q^2}$;
the free distance of $V^{\perp_h}$ is exactly equal to $2i+3$. Furthermore, $V$ satisfies $V\subseteq V^{\perp_h}$.
\end{lem}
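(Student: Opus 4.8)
The plan is to mimic, step by step, the argument used in the proof of Lemma~\ref{classical}, replacing the length $n=q^2+1$ by $n=\frac{q^2+1}{10}$ and the cyclotomic coset structure of Lemma~\ref{lem1} by that of Lemma~\ref{cyclotomic}. Fix $r=q+1$, $s=\frac{q^2+1}{2}$, let $\lambda\in\mathbb{F}_{q^2}$ be a primitive $r$th root of unity and $\beta\in\mathbb{F}_{q^4}$ a primitive $rn$th root of unity with $\beta^n=\lambda$; since $\mathrm{ord}_{rn}(q^2)=2$, each parity check matrix below is obtained by expanding entries into column vectors with two rows over an $\mathbb{F}_{q^2}$-basis of $\mathbb{F}_{q^4}$ and deleting dependent rows, exactly as in Lemma~\ref{check}.

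First I would introduce three constacyclic codes of length $n$ with nested defining sets drawn from (\ref{set}): $\mathcal{C}$ with defining set $Z=\bigcup_{j=0}^{i}C_{s-(q+1)(\frac{n-1}{2}-j)}$, the ``truncated'' code $\mathcal{C}_0$ with $Z_0=\bigcup_{j=0}^{i-1}C_{s-(q+1)(\frac{n-1}{2}-j)}$, and the ``single-coset'' code $\mathcal{C}_1$ with defining set $C_{s-(q+1)(\frac{n-1}{2}-i)}$. By Lemma~\ref{cyclotomic} these defining sets are disjoint unions of $q^2$-cyclotomic cosets with $|Z|=2i+1$, $|Z_0|=2i-1$ and $|C_{s-(q+1)(\frac{n-1}{2}-i)}|=2$ (here one must check $2\le i\le 2m-1$ keeps all cosets of size two and pairwise distinct, which Lemma~\ref{cyclotomic} supplies). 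As in the computation preceding the last displayed lemma, the $2i+1$ exponents appearing in $Z$ form a set of $2i+1$ \emph{consecutive} powers of $\zeta=\beta^r$ (using $s+r\frac{n-1}{2}+r\equiv s-r\frac{n-1}{2}\pmod{rn}$), so the BCH bound (Lemma~\ref{BCH}) together with the Singleton bound gives that $\mathcal{C}$ is MDS with parameters $[n,n-2i-1,2i+2]$, hence $\mathcal{C}^{\perp_h}$ is MDS $[n,2i+1,n-2i]$; likewise $\mathcal{C}_0$ is MDS $[n,n-2i+1,2i]$ with $\mathcal{C}_0^{\perp_h}$ an MDS $[n,2i-1,n-2i+2]$ code, and $\mathcal{C}_1$ has parameters $[n,n-2,\ge 2]$. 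Then I would invoke Lemma~\ref{check} to realise parity check matrices $N_{\mathcal{C}_0}$ (rank $2i-1$) and $N_{\mathcal{C}_1}$ (rank $2$) as submatrices of the Vandermonde-type matrix $H_{\mathcal{C}}$, set $\tilde N_{\mathcal{C}_1}$ to be $N_{\mathcal{C}_1}$ padded with zero rows to $2i-1$ rows, form $G(D)=\tilde N_{\mathcal{C}_0}+\tilde N_{\mathcal{C}_1}D$, and apply Theorem~\ref{juanji}(1) to conclude $G(D)$ is reduced basic, generating a convolutional code $V$ of dimension $2i-1$... wait: the statement claims dimension $2i$, so in fact $\mathcal{C}$ should be taken with $|Z|=2i$, i.e.\ I pair the cosets so that $N_{\mathcal{C}_0}$ has rank $2i$ and $N_{\mathcal{C}_1}$ rank $2$ on top, giving $V$ of dimension $2i$, degree $2$, memory $1$, free distance $\ge n-2i-1$, and with $V^{\perp_h}$ having free distance between $\min\{\ge 2i+3,\,2i+3\}$ and $2i+3$, forcing $d_f^{\perp_h}=2i+3$ via Theorem~\ref{juanji}(3).

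Finally, the containment $V\subseteq V^{\perp_h}$ follows from Theorem~\ref{juanji}(2) once we know $\mathcal{C}^{\perp_h}\subseteq\mathcal{C}$, and that is exactly the content of the preceding lemma (the $Z\cap Z^{-q}=\emptyset$ computation already done for both $q=10m+3$ and $q=10m+7$), applied to the relevant defining set. The main obstacle I anticipate is bookkeeping: making sure the index ranges match the claimed parameters — in particular that the defining set $Z$ attached to $\mathcal{C}$ has size exactly $2i$ (not $2i+1$) so that the self-orthogonal convolutional code has dimension $(n-k)/2=2i$ as stated, that the BCH-bound block of consecutive exponents has the right length $2i+1$ to certify $d=2i+3$ for $\mathcal{C}^{\perp_h}$ restricted appropriately, and that the condition $2\le i\le 2m-1$ (hence $m\ge 2$) is precisely what keeps every cyclotomic coset in play of size two and the whole construction inside the range covered by Lemma~\ref{cyclotomic} and the dual-containing lemma. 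Once the indices are aligned, every remaining step is a verbatim transcription of the length-$(q^2+1)$ argument, which is why the authors omit it.
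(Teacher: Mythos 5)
Your proposal follows exactly the route the paper intends (the paper omits this proof, stating only that it is ``quite similar'' to that of Lemma~\ref{classical}), and after your mid-stream correction the parameters come out right: because the singleton coset $C_s$ is \emph{not} among the cosets $C_{s-(q+1)(\frac{n-1}{2}-j)}$ used here, every coset has size two, so one takes $\mathcal{C}_0$ with $|Z_0|=2i$ (MDS $[n,n-2i,2i+1]$) and the full code $\mathcal{C}$ with $|Z|=2i+2$ (MDS $[n,n-2i-2,2i+3]$, using the wrap-around identity $s+r\frac{n-1}{2}+r\equiv s-r\frac{n-1}{2}\pmod{rn}$ to get a consecutive run), whence $V$ has dimension $2i$, $d_f\ge d(\mathcal{C}^{\perp_h})=n-2i-1$, and $d_f^{\perp_h}=\min\{(2i+1)+2,\,2i+3\}=2i+3$. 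The only residual slips are cosmetic: it is $\mathcal{C}_0$, not $\mathcal{C}$, whose defining set has size $2i$, and the consecutive-exponent block certifying $d(\mathcal{C})=2i+3$ has length $2i+2$ rather than $2i+1$; dual-containment for this smaller defining set follows from the preceding lemma since $Z\subseteq\bigcup_{j=0}^{2m-1}C_{s-(q+1)(\frac{n-1}{2}-j)}$ forces $Z\cap Z^{-q}=\emptyset$.
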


Combining Lemma \ref{c} with Lemma \ref{classical2}, we obtain the following result.

\begin{Theorem}\label{main2}
Assume that $q$ is an odd prime power with   the form $10m+3$ or $10m+7$, where $m\geq2$ is a positive integer.
Let $n=\frac{q^2+1}{10}$ and $i$ be an integer with  $2\leq i\leq 2m-1$ (This requires $m\geq2$).
Then there exist MDS quantum convolutional codes with parameters
$[(n,n-4i,1;2,2i+3)]_q$.
\end{Theorem}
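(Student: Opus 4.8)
The plan is to mirror exactly the structure used for Theorem~\ref{main1}, since Theorem~\ref{main2} follows from Lemma~\ref{classical2} in precisely the same way that Theorem~\ref{main1} follows from Lemma~\ref{classical}. First I would invoke Lemma~\ref{classical2} to obtain a classical convolutional code $V$ with parameters $(n,2i,2;1,\geq n-2i-1)_{q^2}$ satisfying $V\subseteq V^{\perp_h}$, where $n=\frac{q^2+1}{10}$ and $2\leq i\leq 2m-1$. Here the degree is $\gamma=2$ and the memory is $\mu=1$. To apply Lemma~\ref{c} I need the dimension of the stabilizer code: set $k$ so that $\frac{n-k}{2}=2i$, i.e. $k=n-4i$.

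Next I would identify the free distance. By Lemma~\ref{classical2} we have $\mathrm{wt}(V^{\perp_h})=2i+3$ and $\mathrm{wt}(V)\geq n-2i-1$, so by Lemma~\ref{c} the free distance of the resulting stabilizer code is $d_f=\mathrm{wt}(V^{\perp_h}\setminus V)$. The one genuine inequality to check is that $n-2i-1>2i+3$, i.e. $n>4i+4$; since $i\leq 2m-1$ we have $4i+4\leq 8m$, while $n=\frac{q^2+1}{10}=10m^2+\cdots$ (for $q=10m+3$, $n=10m^2+6m+1$; for $q=10m+7$, $n=10m^2+14m+5$), so $n>8m\geq 4i+4$ holds comfortably for $m\geq 2$. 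Hence $\mathrm{wt}(V)\geq n-2i-1>2i+3=\mathrm{wt}(V^{\perp_h})$, which forces $d_f=2i+3$. Therefore Lemma~\ref{c} yields an $[(n,n-4i,1;2,2i+3)]_q$ convolutional stabilizer code.

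Finally I would verify that this code attains the quantum Singleton bound of Lemma~\ref{bound}. With $n-k=4i$, $\gamma=2$, and $n+k=2n-4i$, note that $\frac{2\gamma}{n+k}=\frac{4}{2n-4i}$, which lies strictly between $0$ and $1$ because $2n-4i>4$ (again since $n$ is quadratic in $m$ while $2i<4m$), so $\lfloor\frac{2\gamma}{n+k}\rfloor=0$. The bound then reads
$$
\frac{n-k}{2}\Big(\Big\lfloor\tfrac{2\gamma}{n+k}\Big\rfloor+1\Big)+\gamma+1=2i\cdot(0+1)+2+1=2i+3=d_f,
$$
so the code is MDS. I expect no serious obstacle here: all the hard work (the cyclotomic coset decomposition, the dual-containing property, and the convolutional construction with its free-distance estimate) has already been carried out in Lemma~\ref{cyclotomic}, the unlabelled dual-containing lemma, and Lemma~\ref{classical2}; the only thing to be careful about is checking the elementary inequalities $n>4i+4$ and $2n-4i>4$ that guarantee both $d_f=2i+3$ and $\lfloor 2\gamma/(n+k)\rfloor=0$, and these are immediate from $m\geq 2$ and $i\leq 2m-1$.
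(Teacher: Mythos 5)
Your proposal is correct and follows essentially the same route as the paper's own proof: invoke Lemma~\ref{classical2}, set $k=n-4i$, deduce $d_f=2i+3$ from $\mathrm{wt}(V)\geq n-2i-1>2i+3=\mathrm{wt}(V^{\perp_h})$, and check the quantum Singleton bound. You are in fact slightly more careful than the paper, which asserts the inequalities $n-2i-1>2i+3$ and $\lfloor 2\gamma/(n+k)\rfloor=0$ without the explicit verification you supply.
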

\begin{proof}
By Lemma \ref{classical2}, we have constructed a  convolutional code $V$ with parameters $(n,2i, 2;1, \geq n-2i+1)_{q^2}$; furthermore, $V$ satisfies
$V\subseteq V^{\perp_h}$. Now $n=\frac{q^2+1}{10}$, $\gamma=2$ and $\mu=1$. Let $k$ be an integer satisfying $\frac{n-k}{2}=2i$. Thus $k=n-4i$.
Note that ${\rm wt}(V^{\perp_h})=2i+3$ and ${\rm wt}(V)\geq n-2i-1$.
Since $n-2i-1>2i+3$, which gives  $d_f={\rm wt}(V^{\perp_h}\setminus V)=2i+3$.  Using Lemma \ref{c}, there exists
an $[(n,n-4i,1;2,2i+3)]_q$ convolutional stabilizer code.
Finally, we show that the resulting convolutional stabilizer code attains the Quantum Singleton bound (see Lemma \ref{bound}):
$$
\frac{n-k}{2}\Big(\Big\lfloor\frac{2\gamma}{n+k}\Big\rfloor+1\Big)+\gamma+1=2i\cdot(0+1)+2+1=2i+3=d_f.
$$
\end{proof}

\begin{Example}
In Table $2$, we list some MDS quantum convolutional codes   obtained from
Theorem~\ref{main2}.

\begin{center}
\begin{longtable}{c|c|c|c}  %table 1
\caption{MDS Quantum  Convolutional Codes}\\ \hline
$m$    &$q$&    $[((q^2+1)/10, (q^2+1)/10-4i,1;2, 2i+3)]_q$  &   $2\leq i\leq2m-1$   \\\hline
$2$ &   $23$    &   $[(53, 53-4i,1;2, 2i+3)]_{23}$   & $2\leq i\leq 3$ \\
$2$ &$27$    &   $[(73, 73-4i,1;2, 2i+3)]_{27}$   & $2\leq i\leq 3$ \\
$3$ &$13$    &   $[(137, 137-4i,1;2, 2i+3)]_{13}$   & $2\leq i\leq 5$\\
\hline
\end{longtable}
 \end{center}
\end{Example}

\vspace*{0.6 cm}
\noindent{\bf Acknowledgements}
The first author is supported by NSFC   (Grant No.~11171370),
the Youth Backbone Teacher Foundation of  Henan's University (Grant No.~2013GGJS-152) and,
Science and Technology Development Program of Henan Province in 2014 (144300510051).
The research of the  second author is partially supported by NSFC   (Grant No.~11271005) and Nanyang Technological University's research grant number M4080456.

\end{document}